\DeclarePairedDelimiter\abs{\lvert}{\rvert}
\let\oldFootnote\footnote
\newcommand\nextToken\relax
\renewcommand\footnote[1]{%
    \oldFootnote{#1}\futurelet\nextToken\isFootnote}
\newcommand\isFootnote{%
    \ifx\footnote\nextToken\textsuperscript{,}\fi}
\newtheorem{theorem}{Theorem}
\newtheorem*{theorem*}{Theorem}
\newtheorem{lemma}{Lemma}
\newtheorem{proposition}{Proposition}
\newtheorem{claim}[theorem]{Claim}
\newtheorem{corollary}[theorem]{Corollary}
\theoremstyle{definition}
\newtheorem*{definition*}{Definition}
\newtheorem*{lemma*}{Lemma}
\newcommand{\R}{\mathbb{R}}
\newcommand{\fg}{\mathfrak{g}}
\newcommand{\ff}{\mathfrak{f}}
\newcommand{\g}{\mathfrak{g}}
\newcommand{\bb}{\mathfrak{b}}
\newcommand{\eps}{\varepsilon}
\newcommand{\E}{\mathbb{E}}
\newcommand{\Prob}{\mathbb{P}}
\DeclareMathOperator*{\argmax}{arg\,max}
\def\ee{\mathrm{e}}
\renewcommand{\AA}{A}
\newcommand{\cS}{\Omega}
\newcommand{\sig}{S}
\renewcommand{\ss}{s}
\renewenvironment{proof}[1][\proofname]{%
  \par\pushQED{\qed}\normalfont%
  \topsep6\p@\@plus6\p@\relax
  \trivlist\item[\hskip\labelsep\bfseries#1\@addpunct{.}]%
  \ignorespaces
}{%
  \popQED\endtrivlist\@endpefalse
}
\title{Learning in Repeated Interactions on Networks}
\author[]{Wanying Huang$^{\dagger}$}
\author[]{Philipp Strack$^{\ddagger}$}
\author[]{Omer Tamuz$^{\S}$}
\thanks{$^{\dagger}$California Institute of Technology. Email: \texttt{whhuang@caltech.edu}}
\thanks{$^{\ddagger}$Yale University. Email: \texttt{philipp.strack@gmail.com}}
\thanks{$^{\S}$California Institute of Technology. Email: \texttt{omertamuz@gmail.com}. Omer Tamuz was supported by a Sloan fellowship, a BSF award (\#2018397) and a National Science Foundation CAREER award (DMS-1944153).\\We would like to thank Krishna Dasaratha, Ben Golub, Hamid Sabourian and Leeat Yariv for insightful comments.}
\date{\today}
\begin{document}
\maketitle

\begin{abstract}
We study how long-lived, rational agents learn in a social network.
In every period, after observing the past actions of his neighbors, each agent receives a private signal, and chooses an action whose payoff depends only on the state. Since equilibrium actions depend on higher order beliefs, it is difficult to characterize behavior. Nevertheless, we show that regardless of the size and shape of the network, the utility function, and the patience of the agents, the speed of learning in any equilibrium is bounded from above by a constant that only depends on the private signal distribution. 
\end{abstract}

\section{Introduction}
%As people often learn from observing others, especially from people close to us, a key question in social learning on networks is: how \emph{well} can we learn from observing others’ past actions? For example, suppose there is an unknown underlying binary state -- the housing market is either high or low. The goal is to sell the investment property if the market is high and not sell otherwise. Every year, we receive a private land value report from the city council, and in addition, we also observe a history of past selling decisions from our neighbors. What is the probability that we make the right decision, i.e. sell when the market is high or not sell when the market is low? Furthermore, how does this probability change with the number of neighbors' past decisions that we observe? Intuitively, if the probability of making the right decision goes to one quickly, then it indicates that learning occurs at a high speed.

%In this paper, we study the question of how fast agents learn on networks. We measure the \emph{speed of learning} as the rate at which agents converge to the correct action. In particular, we are interested in the relationship between the speed of learning and the size of the network:  as the size of the network increases, do agents converge to the correct action faster? 

We study social learning by long-lived agents  who observe each others' actions on a social network.
We show that information aggregation fails: the speed of learning stays bounded even in large networks, where efficient aggregation of all private information would lead to arbitrarily fast learning.
Methodologically, we introduce new techniques that allow us to relax commonly made assumptions and study general networks, with forward-looking, Bayesian agents who interact repeatedly. Repeated interactions can, for example, describe the exchange of opinions and information among friends on social media or firms that learn from each others actions. 
Arguably, social learning driven by such interactions can be an important determinant in many choice domains, such as  investments, health insurance, schools, technology adoption, or where to live and work.

More formally, we consider a group of agents who repeatedly interact with their neighbors on a network. There is a fixed but unknown state of the world, taking values from a finite set. In every period, each agent receives a private signal about the state and observes all past actions of his neighbors. Based on this information, each agent updates his beliefs, chooses one of finitely many possible actions, and receives a flow payoff that depends on his action and the state (but not on the actions of others). We consider both myopic agents who maximize their instantaneous payoff, as well as strategic agents who are forward-looking, and exponentially discount the future. Since strategic agents care about their future utilities, they may sacrifice their present flow utilities and choose actions that  induce others to behave in a way which reveals more information in the future.

%with the objective of maximizing the expected discounted number of periods in which he matches the state. Especially, we allow for the agent to myopically maximize the probability that he matches the state in every period, or strategically change his action to affect the future actions of others to induce them to reveal more information to him. 

The constant influx of private information in our model allows every agent to eventually learn the truth and choose the optimal action. Thus, we focus on how fast agents learn.  If the number of agents doubles, the number of private signals available to society also doubles. Hence, if information were aggregated efficiently, the time that it would take for agents to choose correctly (say, with a given high probability) would decrease by a factor of two. In other words, when information aggregation is efficient, the speed of learning increases linearly with the number of agents. The question we ask is: what is the speed of learning in equilibrium, and how does it depend on the number of agents, the structure of the network, the agents' utilities, the signals, and the discount factor? 

We focus on \emph{strongly connected} networks, where there is an observational path between every pair of agents, since otherwise efficient aggregation of information is excluded by the lack of informational channels. This is a mild assumption, and follows, for example, from the ``six degrees of separation rule'', that stipulates that there is a path of length at most six between every two members of a social network.\footnote{The science fiction writer Frigyes Karinthy proposed this rule in his 1929 short story ``L\'ancszemek'' (in English, ``Chains''). The rule was confirmed empirically on a number of online social network data sets \citep{watts1998collective,leskovec2008planetary}.}  Our main result (Theorem \ref{main theorem}) shows that the speed of learning does not increase linearly with the number of agents, and is, in fact, bounded from above by a constant that only depends on the private signal distribution, and is independent of the structure of the network and the remaining parameters of the model.

%One may conjecture that being part of a larger network allows agents to acquire more and more information from observing their neighbors’ actions, as well as indirectly from observing their neighbors’ neighbors’ actions, etc.  Our main result (Theorem \ref{main theorem}) shows that this conjecture is not true. Specifically, we show that the speed of learning does not increase without bound as the size of the network increases. In fact, regardless of the size of the network, the speed of learning from observing actions is bounded above by a constant that depends only on the private signal distribution. 

For example, consider agents who, in each period, observe an independent binary signal that is equal to a binary state with probability 0.9. Then, regardless of the number of agents and network structure, the speed of learning never exceeds ten times the speed at which an agent learns on his own. This is despite the fact that if $n$ agents shared their signals publicly, they would learn $n$ times as fast. Thus, a society of 1,000 agents who observe their neighbors' past actions does not learn faster than a society of ten agents in which information is efficiently aggregated. This means that in a society of 1,000 agents, at least 99\% of the information generated by the private signals is lost for any structure of the observational network (and any equilibrium played by the agents with any utility).

As another illustration of this result, consider the finite two dimensional grid graph: The set of agents is $\{1,\ldots,n\}^2$, and $i$ observes $j$ if and only if $|i-j|=1$. Regardless of the size of the graph, in the early periods each agent is exposed to little information, since (at most) four other agents are observed each period. Theorem~\ref{main theorem} shows that even later in the process the size of the graph does not matter substantially, as the speed of learning is bounded. 

The mechanism behind this bounded speed of learning is as follows: First, in a strongly connected network all agents learn at the same rate, as each agent could guarantee himself the same learning rate as any of his neighbors. We establish the bound on the learning rate by contradiction. Suppose that agents learn at a rate that is higher than the rate that their individual private signals can support. This implies that social information, which consists of agents’ actions, will become much more precise than each agent’s private information. As a result, agents will ignore their private signals and only rely on the social information they observe from their neighbors. This implies that agents’ actions no longer reveal any information about the state, so social information cannot grow too precise over time. This contradicts our previous hypothesis, and we conclude that agents cannot learn too fast.

%the classical information cascades that drive the failure of information aggregation in the herding literature \citep{BichHirshWelch:92, banerjee1992simple, smith2000pathological} -- but due to the repeated interactions requires a very different analysis.

The failure of information aggregation suggested by Theorem~\ref{main theorem} is an asymptotic result that describes how fast learning is in late periods. We complement this result with a numerical calculation for the first ten periods on a simple graph: the line graph with bidirectional observations.
%This exercise alone already requires overcoming non-trivial computational challenges, making a similar calculation on (say) the two-dimensional grid impossible beyond the fourth or fifth period. By adopting the algorithm from \cite{kanoria2013tractable}, 
Assuming myopic agents, and binary state, actions and signals, we find that our asymptotic lower bound on mistake probabilities holds also in the early periods.  
Quantitatively, in each of the first 10 periods agents choose correctly  with a probability that is smaller than that of a group of five agents who share their private signals.
%These calculations also indicate that the assymptotics of Theorem~\ref{main theorem} already apply from the early periods, at least for the line network. 
This holds independently of the number of agents in the network.

We contribute to the social learning literature in three aspects. First, instead of focusing on short-lived agents who act only once, we consider a more realistic model in which agents are long-lived and repeatedly interact with each other. Second, we extend  existing models of social learning from myopic agents to strategic agents who discount their future utilities at a common rate. Analyzing strategic agents is complicated since these agents may have an incentive to choose a sub-optimal action today to learn more information from the actions of others in the future, whereas such an incentive is completely shut down for myopic agents.
%To avoid this difficulty, most of the social learning literature has focused on myopic agents who fully discount their future utilities, thus completely shutting down their incentives to experiment strategically. 
Third, we generalize previous work on the speed of learning  on the complete network where all agents observe each other \citep{harel2021rational} to general social networks where agents only observe their neighbors.

\subsection*{Related Literature}
There are few papers addressing  repeated interaction between rational agents and its role in information aggregation. %\footnote{See a concurrent paper by \cite{migrow2022strategic} for an exception where the author characterizes the equilibrium behavior of two strategic agents while allowing repeated interactions between them.} 
% See \cite{vasal2016decentralized} for an exception where they study strategic players who act repeatedly. However, as agents' payoffs may depend on others' actions, the strategic interaction among their agents is not purely informational, which differs from the model we study here.
This is because it is challenging to analyze the evolution of beliefs of long-lived agents, particularly when these beliefs are influenced by the interactions between them. Most of the literature has focused on either short-lived agents who only act once \citep{banerjee1992simple, BichHirshWelch:92, smith2000pathological, acemoglu2011bayesian} or non-fully-rational agents  \citep{bala1998learning, molavi2018theory} and heuristic learning rules \citep{degroot1974reaching, golub2010naive, dasaratha2020learning}. Nevertheless, as many real-life situations involve repeated interactions, it is natural to study models that allow these interactions. It is likewise interesting to understand rational (and potentially forward-looking) agents as this is an important benchmark case to assess whether or to what extent failures of information aggregation are driven by a lack of patience or lack of rationality of the agents.

Among models which consider repeated interactions, the bandit literature studies endogenous information acquisition and the resulting free rider problem when multiple agents try to learn the state from each other's public signals \citep{bolton1999strategic,keller2005strategic,keller2010strategic,heidhues2015strategic}. \cite{bala1998learning} extend this to a social network setting for non-fully-rational agents who do aggregate the results of their neighbors' experiments but disregard the information contained in their choices. They examine how the geometry of the social network affects learning outcomes.

Focusing on a repeated interaction setting in which agents and have no experimentation motives, \cite{mossel2014asymptotic} consider rational but myopic agents who observe each other's actions. They give conditions for learning to occur on infinite undirected graphs. \cite{mossel2015strategic} further generalize their setting to allow for forward-looking agents.\footnote{In a similar setting, \cite{migrow2022strategic} shows that for two forward looking agents who observe each other there does not exist an equilibrium where agents behave myopically, under some assumptions on the signal structure.}
Unlike our model, agents in these models only receive one signal at the beginning of time.
%this precludes applications in which agents not only learn from others, but also have access to increasingly precise information that stems from their own experiences.
They do not study the speed of learning, and instead focus on identifying the types of social networks in which learning always occurs.

Complementing the previous literature, we take the next natural step: we ask how fast learning occurs and study its relationship with the size of the network. Furthermore, we consider agents with any discount factor with myopic agents as a special case. To our best knowledge, this is the first paper to consider social learning in a network setting with fully-rational  agents who interact repeatedly.  

A recent paper that considers rational agents in a repeated setting is by \cite*{harel2021rational}, who study the speed of learning when all agents are myopic and observe each other. They show that similar to our main result, for any number of agents, the speed of learning from actions is bounded above by a constant. 
Their proofs rely crucially on the symmetry inherent in the complete network in which all agents observe each other and actions are common knowledge. Their analysis  relies on a phenomenon called ``groupthink'': a feedback loop that develops when  all players take the wrong action,  observe that everyone else also took the same action, become more confident in their wrong beliefs, and then again take the wrong action. In incomplete networks, actions are no longer common knowledge, making the techniques used in the aforementioned paper inapplicable in our setting.   The techniques we introduce furthermore allow us to consider non-myopic agents and  multiple states of the world.

%The failure of information aggregation in our model relies on the assumptions of finite action space and bounded private signals.
Our main insight is that learning cannot be too fast since fast individual learning would cause agents to ignore their private signals. This is reminiscent of the information cascades that drive the failure of information aggregation in the classical herding literature \citep{BichHirshWelch:92, banerjee1992simple, smith2000pathological}.  However, the force in our model affects the speed of learning rather than determining whether or not information aggregates, and because of the repeated interactions between agents, it requires a very different analysis. A similar insight also appears in the earlier literature on rational expectations in financial markets, where it implies the breakdown of the efficient market hypothesis \citep{grossman1980impossibility}: prices cannot fully reflect all available information, precisely because if they did, it would eliminate the agents' incentive to respond to their private information, contradicting the assumption that prices contain all information. This is known as the Grossman-Stiglitz paradox.

Following the herding literature, our paper studies the friction that arises for information aggregation when  actions are coarse and thus do not fully reveal beliefs.
Another strand of the literature shows that information aggregation may still be slow, even with a continuous action space. For example, \cite{vives1993fast} considers a Cournot competition model with a common unknown production cost among a continuum of firms. He 
shows that noisy observations of past actions (through market prices) slow down the speed of learning. Although his environment is different from ours,\footnote{More specifically, his setting differs from ours along a number of dimensions: Firms learn from public prices, which are noisy observations of the average actions of others. Furthermore, there is no network and since there is a continuum of agents, any strategic incentive is also abstracted away.} the force behind his slow speed of learning is related to ours: By examining the asymptotic behavior of public information's informativeness (reflected in prices) and agents' responsiveness to their private information, he found that as public information becomes more informative, agents respond less to their private information, leading to a slower increase in how informative the public information can be. Put in his words, ``... information revelation through the price system can be slow precisely because it is successful.'' We contribute to this literature by showing that this intuition in these specific cases generalizes extensively, especially in a general network setting with forward-looking agents.

With a rich-enough action space that fully reveals agents' beliefs, \cite{dasaratha2019aggregative} study how different generational networks affect the learning rate with Gaussian signals and continuous actions. They find that learning is slow for large symmetric overlapping generations networks with a uniform bound on the number of signals aggregated per generation. The main mechanism behind the inefficiency is a confounding effect: 
As earlier generations observe common predecessors, their actions are correlated, reducing the amount of information transmitted to the next generation. This effect is inherent in their overlapping generations network structure, in which information travels unidirectionally. Since information travels bidirectionally in our model, higher-order beliefs pose an obstacle that is not present in \cite{vives1993fast} and \cite{dasaratha2019aggregative}.

%\footnote{Interestingly, the forces behind his slow rate of learning are similar to ours, i.e., once past prices become more informative, agents become less responsive to their private signals, so less information is incorporated in the current prices, preventing learning from growing faster.} 

%We generalize their result in two aspects. First, we consider a general social network rather than a complete observation structure. Second, we relax their assumption of myopic agents and consider strategic agents. We explore new techniques that enable us to make these generalizations, and provide new insight into the forces at work.

%Our new proof technique, which is different from their ``rational groupthink'' approach enables us to make these generalizations. We argue that our new approach is more transparent, and provides further insight into the forces at work.

%Among models which concern the rate at which learning occurs, sequential learning models have been the main workhorse. In a sequential learning model, agents observe their predecessor's actions and only act once. A classic paper by \cite{smith2000pathological} shows that herding, where many agents choose the same action, occurs rather quite general\footnote{General in a sense that learning occurs for both bounded and unbounded signals.}. \cite{hann2018speed} then further study the expected time of learning, which they show that can either be finite or infinite, depending on the tail of the distribution of private beliefs. In contrast, 

\section{Model}
Let $N=\{1, 2, \ldots, n\}$ be a finite set of agents. Time is discrete and the horizon is infinite, i.e.\ $t \in \{1, 2, \ldots\}$. In every time period $t$, each agent $i$ has to choose an action $a_t^i$ from a finite set $\AA$. There is an unknown state of the world $\omega$, taking values in a finite set $\cS$, that is chosen at period 0 and does not change over time. Each state occurs with strictly positive probability. We denote by $\fg,\ff$ generic elements of $\cS$. 

\subsection{Utility and Optimal Actions}
The flow utility for choosing an action $a \in \AA$ when the state is $\fg \in \cS$ is $u(a,\fg)$ for some $u \colon \AA \times \cS \to \R$.
We assume that for each state $\fg \in \cS$ there is a unique action  $a^\fg \in \AA$,  which maximizes $u(\cdot,\fg)$, the utility in that state:
\begin{align*}
    \{a^\fg\} = \argmax_{a \in \AA}u(a,\fg).
\end{align*}
We also assume that these optimal actions are distinct, i.e.\ if $\fg \neq \ff$ then $a^\fg \neq a^\ff$.
These assumptions facilitate learning from actions in the sense that observing the optimal action of an agent who knows the state allows other agents to infer the state.
%Thus, a utility maximizing agent that knows that the state is $\fg$ would choose action $\fg$. %Furthermore, the same choice would be made by an agent with a high enough belief that the state is $\fg$. 
Note that these assumptions hold for any generic utility $u$, as long as the set $\AA$ has at least as many elements as $\cS$. When there are two states then this assumption is necessary to make the model non-trivial, since otherwise there is a dominant action which will always be played in every equilibrium.\footnote{For three states or more this assumption is more restrictive. Indeed, our Lemma~\ref{lemma:strategic-myopic}, which shows that agents eventually choose myopic actions, may not hold without it. This is because even when beliefs concentrate around a state, multiple actions can be optimal for particular likelihood ratios about  the other states, and so non-myopic behavior can persist.}

%\footnote{We make these assumptions to avoid trivialities and to simplify the exposition; our results go through regardless, under appropriate modification of the definition of the speed of learning.}

An important example which the reader may wish to keep in mind is the case of binary states and actions, and where the agent aims to match the state: $\cS=\{\fg,\ff\}$, $\AA = \{a^\fg,a^\ff\}$ with $u(a^\fg,\fg)=u(a^\ff,\ff)=1$ and $u(a^\fg,\ff)=u(a^\ff,\fg)=0$. This setting already features all the forces and tensions of the general case, and likewise offers the same conclusions. 

%\footnote{Suppose that agents receive noisy feedback about whether their previous actions match the state. Then, equivalently, we can think of this feedback as the agents' signals.}

%\footnote{{\bf Omer: maybe add that you can equivalently think of signals as noisy observed utilities.}}
\subsection{Agents' Information} In each period $t$, agent $i$  receives a private signal $s_t^i$ drawn from a finite set $\sig^i_{t}$. Conditional on the state $\omega$, signals $s_t^i$ are independent across agents and time, with distribution $\mu^{i,\omega}_{t} \in \Delta(\sig^i_{t})$. For distinct $\ff,\fg \in \cS$, the distributions $\mu^{i,\fg}_{t}$ and $\mu^{i,\ff}_{t}$ are distinct and  mutually absolutely continuous, so that no signal excludes any state or perfectly reveals the state. Thus, the log-likelihood ratio of any signal $\ss$ 
\[
\ell^{i,\fg,\ff}_{t}(\ss) = \log \frac{\mu^{i,\fg}_{t}(\ss)}{\mu^{i,\ff}_{t}(\ss)}
\]
is well defined. 

We focus on \emph{bounded signals} in the sense that the private belief induced by any signal cannot be arbitrarily strong. More specifically, we assume that there exists a constant $M > 0$ that bounds the absolute value of the likelihood induced by any signal:% $\ss \in \sig_{t,i}$:
\begin{equation}\label{eq:def-M}
 M = 2\sup_{\ff,\fg,i,t,\ss}\abs*{ \ell^{i,\fg,\ff}_{t}(\ss)} \,.
\end{equation}
We make this assumption for tractability and discuss its relaxation in the conclusion. We allow signals to depend on calendar time and the agents' identities to highlight the robustness of our results. However, to understand our main economic insight, it suffices to think of the setting where all signals are i.i.d.\ across agents and time, as is typically assumed in the literature.
%This condition is for example satisfied for arbitrary signals with finite support $|\sig|<\infty$.

For each agent $i$ there is a subset of agents $N_i \subseteq N$ who are his {\em social network neighbors}, and whose actions he observes. We include $i\in N_i$ since agent $i$ observes his own actions. The information available to agent $i$ at time $t$, before taking his action $a_t^i$, thus consists of a sequence of private signals $(s_1^i, \cdots, s_t^i)$ and the history of actions observed by $i$,
\[
    H_t^i = \{a_s^j: s < t, j \in N_i\} \,.
\] Let $\mathcal{I}_t^i = \sig_1^i \times \cdots \times \sig_t^i \times \AA^{|N_i|\times (t-1)}$ so that the private history $I^i_t= (s_1^i, \cdots, s_t^i, H_t^i)$ is an element of $\mathcal{I}^i_t$. 

We assume  that the social network is {\em strongly connected}: there is an observational path between every pair of agents (we relax this assumption in \S\ref{sec:not-strongly-connected}).\footnote{Formally, for each $i,j \in N$ there is a sequence $i=i_1,i_2,\ldots,i_k=j$ such that $i_2 \in N_{i_1}, i_3 \in N_{i_2},\ldots i_{k} \in N_{i_{k-1}}$.}$ ^{,}$\footnote{In a directed Erdős–Rényi graph with $n$ agents, when the expected number of neighbors is $\ln n+c$ for large $c$, then with high probability the graph will be strongly connected, and so our assumption will be satisfied with high probability \citep[see Theorem 5 in][]{graham2008note}.} This assumption avoids situations in which efficient aggregation of information is precluded because there is no channel for information to travel from $i$ to $j$. 
Furthermore, we assume that agents observe their neighbors in every period. This assumption is purely to simplify notation; we discuss the case where an agent observes their neighbors only at intermittent and potentially random times in Appendix~\ref{app:random-time}.

\subsection{Strategies and Payoffs}
A pure strategy of agent $i$ at time $t$ is a function $\sigma_t^i: \mathcal{I}^i_t \to \AA$. A pure strategy of agent $i$ is a sequence of functions $\sigma^i=(\sigma^i_1, \sigma^i_2, \cdots)$ and a pure strategy profile is a collection of pure strategies of all agents, $\sigma= (\sigma^i)_{i \in N}$. We write $\sigma = (\sigma^i, \sigma^{-i})$ for any agent $i \in N$, where $\sigma^{-i}$ denotes the pure strategies of all agents other than $i$. Given a pure strategy profile $\sigma$, the action of agent $i$ at time $t$ is $a^i_t(\sigma) = \sigma^i_t(I_t^i).$
The flow utility of agent $i$ at time $t$ is 
\[
    u(a^i_t(\sigma),\omega).
\]
Agents do not observe their flow utilities. Nevertheless, one can incorporate  observations of utilities into the private signals that agents receive. A special case of our model is one in which the agent receives an observed payoff $v(a_t^i,s_t^i)$ that depends on the action and the signal realization. In this case, the flow utility corresponds to the expected observed payoff $u(a,\omega) = \sum_{s} \mu^{\omega}(s) \cdot v(a,s)$. The agent always observes their payoff as it is only a function of their signal; yet, from an ex-ante perspective, the situation is identical to that of our setting.
The assumption of unobserved flow utilities is commonly made in the literature to model learning without an experimentation motive. Indeed, any learning situation without an experimentation motive can be reduced to such a situation \citep[see, e.g., the discussion in][\S 2.1, p.\ 981]{rosenberg2009informational}.

We assume that agents discount their future utilities at a common rate $\delta \in [0, 1)$. The expected utility of agent $i$ under strategy profile $\sigma$ is thus
\begin{align*}
u^i(\sigma) & = (1-\delta)\sum_{t=1}^{\infty} \delta^{t-1}
\E[u(a^i_t(\sigma),\omega)].
\end{align*}
We call agents with $\delta=0$ \emph{myopic} and agents with $\delta>0$ \emph{strategic}. Myopic agents fully discount future payoffs and choose their actions in each period to maximize expected flow utilities.

Regardless of whether agents are strategic or myopic, the sole benefit of observing others' past actions is to learn about the state. That is, others' actions reflect the signals they receive, and thus observing others' actions can help agents make better inferences about the state. This pure informational motive is an important feature of the model: each agent's flow utility depends only on his own actions and the state, and it is independent of the actions of the others. 

\subsection{Equilibrium}
This is a game of incomplete information, in which agents may have different information regarding the underlying unknown state and the actions of others.  We use Nash equilibrium as our equilibrium concept and refer to it as equilibrium thereafter.\footnote{Our results, which apply to all Nash equilibria, thus also apply to any refinement of Nash equilibrium such as sequential equilibrium.} The existence of a (mixed) equilibrium is guaranteed in this game by standard arguments, since, in the product topology on strategies, the space of strategies is compact and utilities are continuous. We note here that every mixed equilibrium can be mapped to a behaviorally equivalent pure equilibrium by adding to each agent's private signal an additional component that is independent of the state and all other signals, and assuming that the agent uses this signal to randomly choose between actions.\footnote{
Formally, for any signal space $\sig_t^i$ we can consider $\tilde{\sig}_{t}^i = \sig_t^i \times \AA^{|\mathcal{I}_t^i|}$ with the signal distributions $\tilde{\mu}^{i, \fg}_{t}$ equal to the product measure of $\mu^{i, \fg}_{t}$ and  $|\mathcal{I}_t^i|$ independent random variables each taking each value $a \in \AA$ with probability $\Prob[a_t^i = a |I_t^i]$. As this transformation does not affect the informational content of the signals it leaves the constant $M$ unchanged. Furthermore, we can replicate any behaviorally mixed strategy by the pure strategy that takes the action $a$ if and only if the entry of the second component corresponding to the private history $I_t^i$ equals $a$.}
As our results will only depend on the information about the state contained in the signal, it thus suffices to establish them for pure strategy equilibria, to show that they hold for all (pure and mixed) equilibria.

As usual, a pure strategy profile $\sigma$ is an equilibrium if no agent can obtain a strictly higher expected utility by unilaterally deviating from $\sigma$. That is, a pure strategy profile $\sigma$ is an \emph{equilibrium} if for all agents $i$, and all strategies $\tau^i$
\[
u^i(\sigma^i, \sigma^{-i}) \geq u^i(\tau^i, \sigma^{-i})\,.
\]

\subsection{Speed of Learning} We say that agent $i$ chooses correctly at time $t$ if $a^i_t = a^\omega$, i.e., if the agent chose the action that is optimal given the state.
We measure the speed of learning of agent $i$ by the asymptotic rate  at which he converges to the correct action \citep[see, e.g.,][]{vives1993fast, molavi2018theory, hann2018speed,
rosenberg2019efficiency, harel2021rational}. Formally, the {\em speed of learning} of agent $i$ is
\begin{align} \label{def:speed}
   \liminf_{t \to \infty} -\frac{1}{t}\log \Prob[a_t^i \neq a^\omega].
\end{align}
If this limit exists and is equal to $r$, then the probability of mistake at large times $t$ is approximately $\ee^{-rt}$. As we explain below in \S \ref{sec:benchmarks}, this is the case for the benchmark case of a single agent who receives conditionally i.i.d.\ private signals at each period.

%We can thus without loss of generality only consider pure strategies and do so to simplify the exposition. 

\section{The Public Signals Benchmark} \label{sec:benchmarks}
As a benchmark, we briefly discuss the case of \emph{public} signals for a single or multiple agents. We also assume that signals are i.i.d.\ across time and agents, with $\mu^{\fg}=\mu^{i, \fg}_{t}$ for  all $\fg \in \cS$. In the single-agent case, a classical large deviations argument shows that the limit $r_a = \lim_{t \to \infty} -\frac{1}{t}\log\Prob[a^i_t \neq a^\omega]$ exists and is positive.\footnote{For textbook treatments see, e.g., pp.\ 380-384 in \cite{cover2006elements} for the binary state case or, for the general finite state case, Theorem 2.2.30 in \cite{dembo2009large}. See \cite{moscarini2002law} for an application in economics to single agent decision problems, and \cite{frick2022learning} for an application to a multi-agent setting.}  Note that the fact that the limit is positive implies that the agents learn the state; the probability of choosing incorrectly tends to zero. This is a consequence of the assumption that the measures $\mu^\fg$ are distinct which ensures that signals are informative.

Next, consider the case where each of $n$ agents observes all $n$ independent public signals in each period, as well as their neighbors' past actions. As actions contain no additional information about the state relative to the signals, this situation is identical to the single-agent case, except that now each agent receives $n$ independent signals at each period. An agent in period $t$  will thus have observed exactly as many signals as a single agent in autarky in period $n \cdot t$.
It thus follows from the single-agent case that when signals are public, the speed of learning for $n$ agents is $n\cdot r_{a}$.%\footnote{See Fact 2 in \cite{harel2021rational}.}

These results for the case of public signals immediately bound the speed of learning in the private signals case:
%For any social network with more than one agent, each agent observes weakly more information than his private signals.
%Moreover, in equilibrium, even strategic agents will behave myopically in the long run (See Lemma \ref{lemma:strategic-myopic}). Thus, the speed of learning of any agent $i$ is (weakly) higher than in the single-agent case, and this means that $r_a$ is a lower bound to the speed of learning in any network. 
In any social network, observed actions contain weakly less information than the private signals. Thus, $n\cdot r_{a}$ is an upper bound to the speed of learning for any network with $n$ agents and private signals. 
%As we have seen above, having public signals accelerates information aggregation. 

\section{Results}
\label{sec:results}

We now state our main result. It turns out that in a strongly connected network, all agents learn at the same speed (by Lemma \ref{lem:rate-bound-from-observing} in \S \ref{sec:analysis}), and we call this common speed of learning \emph{the equilibrium speed of learning.} In contrast to the public signal case, our main result shows that regardless of the size of the network, the equilibrium speed of learning  is bounded above by a constant.
Recall that in \eqref{eq:def-M} we defined $M/2$ to be the maximal log-likelihood ratio induced by any signal.

\begin{theorem} \label{main theorem}
The equilibrium speed of learning is at most $M$, in any equilibrium, on any social network of any size, for any discount factor $\delta \in [0, 1)$, and any utility $u$. 
\end{theorem}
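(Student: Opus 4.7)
The strategy is proof by contradiction, implementing the informational mechanism described in the introduction: if the learning rate were to exceed $M$, social information would become so much more precise than any single period's private signal that agents would effectively ignore their private signals when acting, and their actions would therefore stop injecting fresh information into the system.

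By the equal-rates lemma (Lemma~\ref{lem:rate-bound-from-observing} of \S\ref{sec:analysis}), I may fix a common equilibrium learning rate $r$ and suppose for contradiction that $r > M$. For each agent $i$, I would decompose the posterior log-likelihood ratio $L_t^i = \log\bigl(\Prob[\Theta=\mathfrak{g}\mid I_t^i]/\Prob[\Theta=\mathfrak{b}\mid I_t^i]\bigr)$ as $L_t^i = Y_t^i + W_t^i$, where $Y_t^i = \sum_{s \le t} \log\bigl(\mu^i_{s,\mathfrak{g}}(s_s^i)/\mu^i_{s,\mathfrak{b}}(s_s^i)\bigr)$ is the contribution of $i$'s own signals and $W_t^i$ is the conditional log-likelihood of the observed history $H_t^i$ given $s^i_{\le t}$. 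Definition~\eqref{eq:def-M} immediately gives the deterministic bound $|Y_t^i| \le tM/2$, no matter what the other agents or the strategy profile do.

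The next step is a concentration argument showing $|W_t^i|$ must eventually exceed $tM/2$ on a high-probability event. Since a one-dimensional computation gives $\Prob[a_t^i \neq \Theta \mid I_t^i] \ge \tfrac{1}{2} e^{-|L_t^i|}$ for any (pure or mixed) decision rule, the hypothesized bound $\Prob[a_t^i \neq \Theta] \le e^{-(r-\eps)t}$ combined with Markov's inequality yields $|L_t^i| \ge (r-\eps)t - c$ with probability at least $1 - 2e^{-c}$. Hence $|W_t^i| \ge (r-\eps-M/2)t - c$, and for $r>M$ and small $\eps$ this strictly exceeds $tM/2$ for all sufficiently large $t$. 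On this event the sign of $L_t^i$ equals the sign of $W_t^i$ regardless of the realization of the private signals, so a myopic agent's optimal action is a deterministic function of $H_t^i$ alone. For a strategic agent I would bound the gain from deviating away from the socially-optimal action: by an imitation argument the agent can guarantee an error probability comparable to any neighbor's, so deviating costs flow utility of order $1$, whereas the informational benefit passed to future selves is bounded by future neighbors' error probabilities, which are themselves $O(e^{-rt})$ and hence negligible.

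The conclusion follows by iterating: once for some $T$ every agent's period-$t$ action ($t \ge T$) is, with high probability, a deterministic function of $H_t^i$, the action $a_t^i$ conveys no new Bayesian information about $\Theta$ to any observer beyond what was already in the past histories. Consequently $W_t^i$ is eventually trapped in a bounded interval, so $L_t^i$ grows no faster than $Y_t^i$, i.e.\ at rate at most $M/2$, contradicting $r > M$. The main obstacle, where I expect most of the technical work to concentrate, is twofold: first, making the ``high-probability event'' step quantitatively tight so that the rare complementary event cannot itself sustain fast learning (the argument must be uniform in $t$ and propagate across all observers in the strongly connected network); and second, handling the strategic case, in which one must rigorously rule out deviations where a patient agent sacrifices current flow utility in order to inject additional information into the public history for later use.
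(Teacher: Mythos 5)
Your overall architecture matches the paper's: decompose the posterior log-likelihood into a private part that is linearly bounded by the signal strength and a social part that the contradiction hypothesis forces to grow faster, conclude that actions become deterministic functions of observed histories and hence uninformative, and derive a contradiction. The paper uses the decomposition $L^i_t = S^i_t + P^i_t$ with $S^i_t = \log\frac{\Prob[H^i_t\mid\Theta=\mathfrak{g}]}{\Prob[H^i_t\mid\Theta=\mathfrak{b}]}$, whereas you use $L^i_t = Y^i_t + W^i_t$ with $W^i_t$ the likelihood of $H^i_t$ \emph{conditional on} the private signals; the two are related by $W^i_t = S^i_t + (P^i_t - Y^i_t)$. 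Your route to the growth of the social part (the pointwise bound $\Prob[a^i_t\neq\Theta\mid I^i_t]\geq\tfrac12 \ee^{-|L^i_t|}$ plus Markov and Borel--Cantelli) is a legitimate substitute for the paper's Lemma~\ref{prop1}, and your treatment of strategic deviations, though looser than Lemma~\ref{lemma:strategic-myopic}(i) (whose threshold $-\log(1-\delta)$ requires no appeal to the hypothesized learning rate), is repairable.

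The genuine gap is the step ``on this event the sign of $L^i_t$ equals the sign of $W^i_t$ \ldots\ so a myopic agent's optimal action is a deterministic function of $H^i_t$ alone.'' As you define it, $W^i_t = \log\frac{\Prob[H^i_t\mid s^i_{\leq t},\Theta=\mathfrak{g}]}{\Prob[H^i_t\mid s^i_{\leq t},\Theta=\mathfrak{b}]}$ is a function of \emph{both} the history and the private signals, so ``the sign of $L^i_t$ equals the sign of $W^i_t$'' does not by itself make the action predictable from $H^i_t$, and an action that depends on the signals through $\mathrm{sign}(W^i_t)$ would still inject information. It is in fact true that $W^i_t$ is $H^i_t$-measurable, but proving this requires showing that, conditional on the state, the history constrains $i$'s signals only through membership in the consistency set $\Omega^i_{H^i_t}$ of signal sequences compatible with $i$'s observed actions---which is precisely the computation carried out in the paper's Lemma~\ref{lemma:private_signal} (and is also why the paper's private bound is $Mt$ rather than $Mt/2$: the $H^i_t$-measurable correction term $\log\frac{\mu^{i,t-1}_{\mathfrak{b}}(\Omega^i_{H^i_t})}{\mu^{i,t-1}_{\mathfrak{g}}(\Omega^i_{H^i_t})}$ costs another $M(t-1)/2$). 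Your decomposition relocates this work rather than avoiding it, and your write-up does not flag it. Separately, your closing claim that ``$W^i_t$ is eventually trapped in a bounded interval'' needs the induction that, once \emph{all} agents' actions after a random time $T$ are deterministic in their own histories, every subsequent action is a deterministic function of the global history $H_T$; the paper lands the contradiction more cleanly by giving $H_\infty$ to an outside observer whose belief then freezes at time $T$, so that his error probability $p_\infty>0$ bounds every agent's error probability from below, contradicting $\Prob[a^i_t\neq\Theta]\to 0$.
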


Perhaps surprisingly, Theorem \ref{main theorem} shows that adding more agents (thus more information) to the network and expanding the network cannot improve the speed of learning beyond some bound, which is twice the strength of the strongest possible signal, as measured in log-likelihood ratios. Indeed, this upper bound on the learning speed implies that 
more and more information is lost as the size of the network increases. 

For example, for a binary state, binary actions and independent binary signals that are equal to the state with probability 0.9, the speed of learning in a social network of any size is bounded by that of ten agents who observe each other's signals directly.\footnote{To see this, given the signal distribution, we calculate the speed of learning in the single-agent case, which is approximately equal to $0.51$ (see \cite{harel2021rational} for exact expressions for the speed of learning in the binary state case). As discussed in \S \ref{sec:benchmarks}, the learning speed in a network of ten agents with public signals is ten times that of the single-agent case. From Theorem \ref{main theorem}, the upper bound $M$ to the equilibrium speed of learning  is approximately $4.4$, which is less than ten times $0.51$.}
Consequently, in any social network, even if there are 1,000 agents who  observe their neighbors' past actions, they cannot learn faster than a group of ten agents who share their private signals. Equivalently, their speed of learning cannot be more than  ten times that of a single agent. Thus almost all of the private information in large networks is lost, resulting in inefficient information aggregation. 

The idea behind our proof of Theorem \ref{main theorem} is as follows. Intuitively, one might think that larger networks would boost the speed of learning as agents acquire  more  and more information from their neighbors, as well as indirectly from their neighbors' neighbors etc. However, we argue that the social information gathered from observing neighbors' past actions cannot be too precise. Indeed, if that were the case, 
%regardless of whether agents are strategic or myopic, agents would ignore their private signals and only choose the myopically optimal action based on the social information in equilibrium. 
agents would base their decisions only on the social information. As a result, their actions would no longer reveal any information about their private signals so that information aggregation would cease. Thus, social information cannot grow to be much more precise than private information. But if agents learn quickly, then their actions provide very precise social information. Hence, we conclude that agents cannot learn too quickly.

In sum, regardless of the size of the network, private information must continue to influence agents' decision-making, which can only happen if the social information is not too precise, which in turn can only happen if agents do  not learn quickly. Moreover, as we state in the next section, $M$ is an upper bound to how fast the precision of private information increases with time (see Lemma \ref{lemma:private_signal} in \S \ref{sec:analysis}), and this bound, too, is independent of the network size. Combining these insights, we conclude that the speed of learning in a social network of any size is bounded by $M$.

\subsection{Numerical Calculation on the Line Graph}
While Theorem~\ref{main theorem} shows that information aggregation fails in the long run, it leaves open the question of what happens in early periods.
Clearly, if many agents all observe each other, much information could be aggregated already in the first period, as the first period actions can reveal many independent pieces of information.
The answer to this question becomes less clear in a setting where the number of neighbors is bounded, even if there are many agents in total.
%However, if the set of actions is relatively coarse, and the number of agents which can be directly observed is small it is unclear whether learning ever proceeds 

To supplement the asymptotic result of Theorem~\ref{main theorem} we consider agents who observe both of their adjacent neighbors on a line, i.e., $N_i = \{i-1,i,i+1\} \cap N$.
%This is the graph in which agents $i,j \in \{1,\ldots,n\}$ observe each other if and only if $|i-j|=1$.
We study a binary state and binary action setting with a uniform prior and assume that in each period, each agent gets a conditionally independent and identically distributed symmetric binary signal that is equal to the state with probability $q$.
We consider  myopic agents, i.e., $\delta=0$, and the tie-breaking rule under which agents follow their first signal when they are indifferent.  

Under these assumptions we calculate the exact probabilities of mistakes in the first $10$ periods.  A naive calculation would require considering some $10^{30}$ possible signal realizations, which is not feasible.\footnote{On a bidirectional line the number of signals that could (potentially indirectly) influence an agent's period $t$ action is the sum of her total number of private signals up to time $t$ and the total number of signals observed by his $t-1$ neighbors in each direction, i.e. $ t+\sum_{s \leq t-1} (2 s) = t + t (t-1) = t^2 $. When $t=10$, this would yield $2^{100} \approx 1.3 \times 10^{30}$ signal  realizations.}  To approach the computational problem, we use the ``dynamic cavity algorithm'' proposed by \cite{kanoria2013tractable} for calculating Bayesian beliefs in social learning environments on tree graphs, which exploits the fact that conditioning on the state and a given agent's actions makes the actions of his left-hand neighbor independent of the actions of his right-hand neighbor.
As such a decoupling argument is not available for graphs with cycles, it seems computationally infeasible to perform a similar calculation for, e.g., the two dimensional grid.

We focus on the agents who are not close to the ends of the graph: All agents $i \in \{11,12,\ldots,n-12,n-11\}$ face the same decision problem in the first ten periods, and we calculate their probabilities of choosing the wrong action. Note that these probabilities are independent of the number of agents $n$. Equivalently, these are the error probabilities of any agent on a bi-infinite line graph.

\begin{figure}[t!]
\centering
\includegraphics[width=\textwidth]{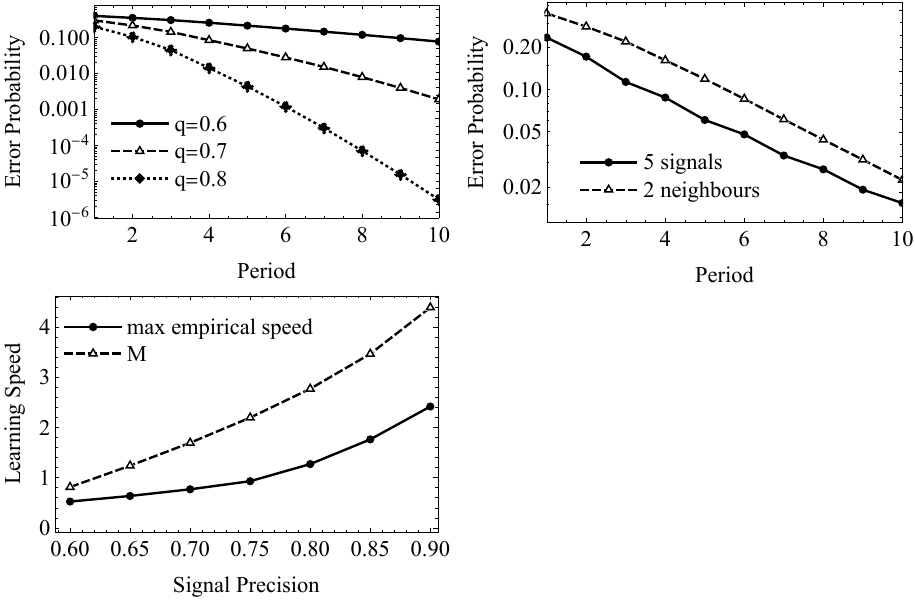}
\caption{On the top left: The log of the inverse of probability of error $-\log \mathbb{P}[a^i_t \neq a^\omega]$ as a function of the period  for agents on the bidirectional line graph (away from its ends) for different probabilities $q$ of the signal matching the state. The top right picture compares the error probabilities when the agent observes their 2 neighbors' actions and 4 other agents' signals for $q=0.65$.
On the bottom: The maximal empirical learning speed $\max_{t \in \{2,\ldots, 10\}}-\log \frac{1}{t}\mathbb{P}[a^i_t \neq a^\omega]$ for different precisions of the signal on the $x$-axis.\label{fig:line-learning}}

\end{figure}

The results are depicted in Figure~\ref{fig:line-learning}. 
On the top left, we plot the evolution of error probabilities on a log-scale for different precision of the signals $q$.  Since the ordinate uses a logarithmic scale, an exponential decay of error probabilities would manifest as a downward sloping straight line. Indeed, the graph shows that this decay is approximately exponential from the very early periods. This suggests that at least for myopic agents (the limiting case of very impatient agents), our asymptotic results bounding the speed  of learning may begin to apply early on, even before these agents have learned the state very precisely. %If that were the case, then impatient agents may be worse off compared to patient agents, since learning has already slowed down while they are still uncertain about the state.

On the bottom left we plot, for different signal precisions, the maximum of the empirical learning speed $\max_{t \in \{2,\ldots, 10\}}-\log \frac{1}{t}\mathbb{P}[a^i_t \neq a^\omega]$ starting from the second period as well as our asymptotic bound of $M$.
As one can see the asymptotic bound we obtained in Theorem~\ref{main theorem} holds for different precision of the signal starting from the second period.
Thus, in this example, the conclusion of Theorem~\ref{main theorem} do not only hold asymptotically, but already in early periods.
For comparison, we also plot the maximum of the rate of mistake for the case where five agents share their private signals on the top right, for $q=0.65$. The figure shows that in the first ten periods agents do worse on the bidirectional line than they would if they directly observed the private signals of their two neighbors on the right and two neighbors on the left.\footnote{While not depicted in the figure, this still holds for any precision of the signal $q \in \{0.6,0.7,0.8,0.9\}$.}

%If learning were faster on longer lines---for example if at period $t$ this process aggregated the signals of agents who are within distance $t$ of a given agent---then decay would be faster than exponential, with error probabilities potentially smaller than $\ee^{-Ct^2}$ for some $C>0$. The fact that we observe exponential decay  provides some evidence that---at least on the line graph---the conclusion of Theorem~\ref{main theorem} already applies in the early periods on the process.

\section{Analysis} \label{sec:analysis}

In this section we provide a detailed analysis of the agents' beliefs and behavior,leading to a proof sketch for Theorem~\ref{main theorem}.

\subsection{Agents' Beliefs}
Let $p^{i,\fg}_t$ denote the posterior belief of agent $i$ assigned to event $\omega = \fg$ after observing $I_t^i$, i.e. $p^{i, \fg}_t = \Prob[\omega=\fg|I_t^i]$. The log-likelihood ratio of agent $i$'s posterior beliefs of the state being $\fg$ over the state being $\ff$ at time $t$ is 
\begin{equation}
L_t^{i,\fg,\ff} =\log\frac{p_t^{i,\fg}}{p_t^{i,\ff}}=\log\frac{\Prob[\omega=\fg|s^i_1,\ldots,s^i_t, H^i_t]}{\Prob[\omega=\ff|s^i_1,\ldots,s^i_t, H^i_t]}.     
\end{equation}
Then, it follows from Bayes' rule that this log-likelihood ratio of agent $i$'s posterior beliefs at time $t$ is equal to
\begin{align*}\label{eq:LLR}
  L^{i,\fg,\ff}_t 
  &= \log\frac{\Prob[\omega=\fg]}{\Prob[\omega=\ff]}
    + \log \frac{\Prob[H^i_t | \omega=\fg]}{\Prob[H^i_t | \omega=\ff]}
    + \log\frac{\Prob[s_{1}^i,\ldots,s_t^i|H^i_t,\omega=\fg]}{\Prob[s_{1}^i, \cdots, s_{t}^i|H^i_t,\omega =\ff]}.
\end{align*}
We call  
\begin{align*}
 Q^{i,\fg,\ff}_t = \log\frac{\Prob[\omega=\fg]}{\Prob[\omega=\ff]}
    +\log \frac{\Prob[H^i_t | \omega=\fg]}{\Prob[H^i_t | \omega=\ff]},
 \end{align*}
the {\em social likelihood} of agent $i$ at time $t$. This is 
the log-likelihood ratio of the social information observed by agent $i$.
Intuitively, $Q_t^i$ measures the inference an outside observer would draw from the observations of the actions of $i$ and his neighbors, without observing $i$'s private signals.
Similarly, we call
 \begin{align*} P^{i,\fg,\ff}_t = \log\frac{\Prob[s_{1}^i,\ldots,s_t^i|H^i_t,\omega=\fg]}{\Prob[s_{1}^i, \cdots, s_{t}^i|H^i_t,\omega =\ff]},
\end{align*}
the {\em private likelihood}  agent $i$ at time $t$. Thus, we can write the log-likelihood ratio of agent $i$'s posterior beliefs at time $t$ as 
\begin{equation} \label{eq:beliefs}
L^{i,\fg,\ff}_t = Q^{i,\fg,\ff}_t + P^{i,\fg,\ff}_t,   
\end{equation}
which is the sum of his social likelihood and his private likelihood. We call $L^{i,\fg,\ff}_t$ the \emph{posterior likelihood} of agent $i$ at time $t$.

%Notice that the sole benefit of observing others' past actions is to learn about the state. That is, others' actions reflect the signals they receive, and thus observing others' actions can help agents make better inferences about the state. This pure informational motive is a feature of the model: each agent's stage utility depends only on his own action and the state, and it is independent of the actions of others.

%Agent $i$'s payoff in period $t$ depends on his action $a_t^i$ and the state $\omega$.  Specifically, the stage utility $u(a_t^i, \omega)=1$ if $a_t^i =\omega$ and $0$ otherwise. Thus, an agent's expected payoff is higher the more he learns about the state.

\subsection{Agents' Behavior}
 
In the context of a strategy profile $\sigma$, the \emph{myopic action} of agent $i$ at time $t$ is
\begin{align*}
    m^i_t \in \argmax_{a \in \AA} \E[u(a,\omega)|I^i_t].
\end{align*}
This is the action that maximizes the expected flow utility given the information available at that time, and hence it is the action that a myopic agent would take.\footnote{We assume some deterministic tie-breaking rule when the agent is indifferent. Our results do not depend on this choice and would follow for any tie breaking rule that is common knowledge.} 

In contrast to a myopic agent, a strategic agent may not always choose the myopic action in equilibrium. Indeed, since a strategic agent is forward-looking, in each period he faces a trade-off between choosing the myopic action and strategically experimenting by choosing a non-myopic action. On the one hand, 
he needs to bear the immediate cost associated with a non-myopic action. On the other hand, choosing a non-myopic action may allow him to elicit more information from his neighbors' future actions, which he could then use to make better choices in the future. Hence, when the informational gain from experimenting exceeds the current loss caused by a non-myopic action, a strategic agent has an incentive to experiment.

Nevertheless, we show that when an agent's belief of any state is close enough to zero or one, he chooses the unique optimal action at that state which is also the myopic action in equilibrium. This holds despite the fact that the agent is forward-looking. Intuitively, if a strategic agent is very confident about the state, he is expected to pay a high cost if he chooses a non-myopic action and experiments. Consequently, as his expected future gain will not exceed the expected current loss from experimenting, he has no incentive to experiment. 

\begin{lemma}[Myopic and Strategic Behavior]\label{lemma:strategic-myopic}\hfill
\begin{compactenum}[(i)]
\item There is a constant $c>0$, independent of $\delta$ such that, in any equilibrium, if it holds for $\fg \in \cS$ that $p^{i,\fg}_t > \frac{c}{c+1-\delta}$, then $a^i_t = m^i_t = a^\fg$.
\item There exists a random time $T<\infty$ such that in equilibrium, all agents behave myopically after $T$, i.e. $t \geq T \Rightarrow a^i_t = m^i_t$ for all $i$ almost surely.
\end{compactenum}
\end{lemma}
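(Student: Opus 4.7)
The plan is to prove (i) via a direct deviation argument and then deduce (ii) from (i) combined with the almost-sure convergence of beliefs to the truth. For (i), fix agent $i$, time $t$, and a positive-probability realization of $I^i_t$ with $|L^i_t| \geq -\log(1-\delta)$. By symmetry between the two states, assume $L^i_t \geq -\log(1-\delta)$, i.e., $p^i_t \geq 1/(2-\delta)$ and $m^i_t = \mathfrak{g}$. Suppose toward contradiction that $\sigma^i(I^i_t) = \mathfrak{b}$, and consider the deviation $\tau^i$ that agrees with $\sigma^i$ except on the event that the fixed $I^i_t$ is realized, on which it plays $\mathfrak{g}$ at $I^i_t$ and at every descendant information set (well-defined because agent $i$ recalls his past information for all $s \geq t$).

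Conditional on this event, $\tau^i$ plays $\mathfrak{g}$ at every period $s \geq t$, so each flow utility has conditional expectation $p^i_t$ and the conditional continuation utility equals $\sum_{s\geq t}(1-\delta)\delta^{s-1}p^i_t = \delta^{t-1}p^i_t$. Under $\sigma$, the time-$t$ flow is $1-p^i_t$, and the future flows (each in $[0,1]$) contribute at most $\sum_{s>t}(1-\delta)\delta^{s-1} = \delta^t$ in discounted value. Since $\sigma$ and $\tau^i$ coincide off the conditioning event, the Nash inequality $u^i(\sigma) \geq u^i(\tau^i)$ forces, conditional on that event,
\[
(1-\delta)\delta^{t-1}(1-p^i_t) + \delta^t \;\geq\; \delta^{t-1}p^i_t,
\]
which simplifies to $p^i_t(2-\delta) \leq 1$, i.e., $L^i_t \leq -\log(1-\delta)$. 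Combined with the hypothesis, equality must hold, and the tie is resolved in favor of the myopic action.

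For (ii), the posterior $p^i_t = \Prob[\Theta=\mathfrak{g}\mid I^i_t]$ is a bounded martingale, so by Lévy's upward convergence it converges almost surely to $\Prob[\Theta=\mathfrak{g}\mid\mathcal{F}^i_\infty]$. Since $\mathcal{F}^i_\infty$ contains the sequence of private signals $(s^i_1, s^i_2, \ldots)$, which (as noted in the introduction) asymptotically determine $\Theta$, this limit equals $\mathbbm{1}(\Theta=\mathfrak{g})$ almost surely. Hence $|L^i_t| \to \infty$ a.s., and for each $i$ there exists a finite random $T^i$ after which $|L^i_t| \geq -\log(1-\delta)$. Part (i) then gives $a^i_t = m^i_t$ for all $t \geq T^i$, and $T := \max_{i \in N} T^i$ is almost surely finite because $N$ is finite.

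The main subtlety is the choice of deviation in (i). A naive one-shot deviation only yields $|2p^i_t - 1| \leq \delta/(1-\delta)$, equivalently $|L^i_t| \leq -\log(1-2\delta)$, which is strictly weaker than $-\log(1-\delta)$ and vacuous for $\delta \geq 1/2$. The improvement comes from letting $\tau^i$ commit to $\mathfrak{g}$ \emph{forever} after $I^i_t$: this pushes the continuation value of $\tau^i$ from a generic lower bound of zero up to the sharp $\delta^{t-1}p^i_t$, the most that any strategy could guarantee once it resolves to match the favored state, and this is exactly what is needed to reach the claimed threshold. Part (ii) is then essentially mechanical once one accepts the standard martingale-convergence-plus-tail-triviality argument underlying a.s.\ convergence of beliefs to the truth.
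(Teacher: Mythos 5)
Your proof is correct and follows essentially the same route as the paper: part (i) compares the value $p^i_t$ guaranteed by the deviation ``play $\mathfrak{g}$ forever'' against the upper bound $(1-\delta)(1-p^i_t)+\delta$ on any strategy that plays $\mathfrak{b}$ at $t$, yielding the threshold $p^i_t\geq 1/(2-\delta)$, and part (ii) combines martingale convergence of beliefs to the truth with finiteness of $N$, exactly as in the appendix. The only (shared) imprecision is the knife-edge case $|L^i_t|=-\log(1-\delta)$, where the two bounds coincide and strict suboptimality of $\mathfrak{b}$ is not actually established; this is immaterial downstream, since the lemma is only invoked with strict inequalities.
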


The first part of Lemma \ref{lemma:strategic-myopic} applies to a fixed discount factor $\delta$, rather than asymptotically to $\delta$ tending to one. So, agents need not have learned the state very precisely at the point in which they become myopic. However, it does imply that as agents become more patient, i.e., $\delta$ increases, the posterior belief threshold $\frac{c}{c+1-\delta}$ for choosing the myopic action becomes closer to certainty. Indeed, as $\delta$ approaches $1$, agents value their future utilities more and the incentive of experimenting becomes stronger.  For these agents to forgo their potential expected future informational gains and choose the myopic action, they must be fairly confident about the state. The second part of Lemma \ref{lemma:strategic-myopic} states that in finite time the belief of all agents will be sufficiently precise such that they all behave myopically in all future periods.

%Having understood the strategic agents' behavior in equilibrium allows us to focus on their learning rates in our main model. %We now return to our main model, where signals are private, and each agent $i$ only observes his own signals and the past actions of his neighbors in $N_i$. 
The next lemma shows that  in equilibrium, each agent learns weakly faster than any of his neighbors. Denote the equilibrium speed of learning of agent $i$ by $r_i$.
\begin{lemma}[All agents learn at the same speed]\label{lem:rate-bound-from-observing}\hfill
\begin{compactenum}[(i)]
\item If agent $i$ can observe agent $j$, i.e. $j \in N_i$, then  in  equilibrium $i$ learns weakly faster than $j$, i.e. $r_i \geq r_j$.

\item All agents learn at the same speed, i.e. $r_i = r_j$ for all $i,j$, in any strongly connected network.
\end{compactenum}
\end{lemma}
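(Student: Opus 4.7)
My plan is to prove part~(i) via a ``more information is better'' comparison between the myopic action $m^i_t$ and a trivial estimator built from $j$'s previous action, and then to derive part~(ii) by iterating part~(i) along observational paths supplied by strong connectivity. For part~(i), the starting observation is that since $j \in N_i$, the action $a^j_{t-1}$ is a (deterministic) function of $I^i_t$. The myopic action $m^i_t$ is the MAP estimator of $\Theta$ based on $I^i_t$, and as such minimizes the error probability over all $I^i_t$-measurable estimators; in particular, it beats the trivial estimator that simply predicts $\Theta = a^j_{t-1}$, giving
\[
\Prob[m^i_t \neq \Theta] \;\le\; \Prob[a^j_{t-1} \neq \Theta].
\]

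The next step is to pass from $m^i_t$ to the equilibrium action $a^i_t$, which for strategic agents may differ from $m^i_t$ whenever $i$ experiments. Lemma~\ref{lemma:strategic-myopic}(i) gives $\{a^i_t \neq m^i_t\} \subseteq \{|L^i_t| < -\log(1-\delta)\}$, and on this latter event a short log-likelihood computation yields $\min(p^i_t, 1-p^i_t) \ge c_\delta := (1-\delta)/(2-\delta) > 0$. Since $\Prob[m^i_t \neq \Theta] = \E[\min(p^i_t, 1-p^i_t)]$, one obtains
\[
\Prob[m^i_t \neq \Theta] \;\ge\; c_\delta\,\Prob[|L^i_t| < -\log(1-\delta)] \;\ge\; c_\delta\,\Prob[a^i_t \neq m^i_t].
\]
Combining with the display above gives $\Prob[a^i_t \neq \Theta] \le \Prob[m^i_t \neq \Theta] + \Prob[a^i_t \neq m^i_t] \le (1 + c_\delta^{-1})\Prob[a^j_{t-1} \neq \Theta]$. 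Taking $\liminf_t -\frac{1}{t}\log$ of both sides kills the multiplicative constant and yields $r_i \ge r_j$, completing part~(i).

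Part~(ii) follows by iteration. For any $i, j$ in a strongly connected network, there exist observational paths $i = i_1, i_2, \ldots, i_k = j$ with $i_{m+1} \in N_{i_m}$, and symmetrically $j = j_1, \ldots, j_l = i$ in the other direction. Applying part~(i) along each edge chains the inequalities to $r_i \ge r_{i_2} \ge \cdots \ge r_j$ and $r_j \ge \cdots \ge r_i$, so $r_i = r_j$.

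The main obstacle is the treatment of strategic experimentation. For myopic agents ($\delta = 0$), $a^i_t \equiv m^i_t$ by definition and part~(i) is essentially immediate from the MAP comparison. For $\delta > 0$, however, the equilibrium action may deviate from the myopic one, and the crucial link is the quantitative conversion of the qualitative ``agents act myopically when confident'' statement of Lemma~\ref{lemma:strategic-myopic}(i) into the bound $\Prob[a^i_t \neq m^i_t] \le c_\delta^{-1}\Prob[m^i_t \neq \Theta]$. The cut-off $-\log(1-\delta)$ is exactly what makes $c_\delta$ a positive constant independent of $t$, so that this factor drops out when passing to the exponential rate.
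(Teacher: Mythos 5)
Your proof is correct, but it takes a genuinely different route from the paper's. The paper proves part (i) by a pure deviation argument: agent $i$ could switch at time $t$ to playing $a^j_{t-1}$ forever, and since this cannot be a profitable deviation in equilibrium, a direct computation of the forgone utility yields $\Prob[a^i_t \neq \Theta] \leq \frac{1}{1-\delta}\Prob[a^j_{t-1}\neq\Theta]$, after which the constant dies in the rate exactly as in your argument. Your proof instead decomposes the error into a myopic part and an experimentation part: the MAP property of $m^i_t$ gives $\Prob[m^i_t\neq\Theta]\leq\Prob[a^j_{t-1}\neq\Theta]$ (valid since $a^j_{t-1}$ is $I^i_t$-measurable when $j\in N_i$), and Lemma \ref{lemma:strategic-myopic}(i) confines the event $\{a^i_t\neq m^i_t\}$ to the region $|L^i_t|<-\log(1-\delta)$, where $\min(p^i_t,1-p^i_t)\geq\frac{1-\delta}{2-\delta}$, so that $\Prob[a^i_t\neq m^i_t]\leq c_\delta^{-1}\Prob[m^i_t\neq\Theta]$; your constant $1+c_\delta^{-1}$ is larger than the paper's $\frac{1}{1-\delta}$, but both are independent of $t$ and hence irrelevant to the rate. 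The trade-offs: the paper's proof is self-contained and does not rely on Lemma \ref{lemma:strategic-myopic}, whereas yours creates a (non-circular, hence harmless) dependency on it; in exchange, your argument cleanly separates the statistical step (more information cannot hurt the myopic action) from the strategic step (experimentation only occurs at bounded-away-from-certain beliefs, hence with probability comparable to the myopic error), and would apply verbatim to any behavior rule satisfying the confidence-threshold property, not just to equilibrium play. The iteration for part (ii) along observational paths is identical to the paper's.
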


%An immediate consequence of Lemma~\ref{lem:rate-bound-from-observing} is that in a strongly connected network all agents learn at the same rate in equilibrium.\footnote{As there is an observational path from each agent $i$ to each other agent $j$, Lemma~\ref{lem:rate-bound-from-observing}, applied along this path, implies that $r_i \geq r_j$. The opposite inequality holds by the same argument. } 
We will henceforth call the common speed of learning in a strongly connected network \emph{the equilibrium speed of learning}. The proof of this lemma relies on an extension of the \emph{imitation principle} from myopic agents to strategic agents. For myopic agents, the imitation principle states that if $i$ observes $j$, then $i$'s actions are not worse than $j$'s:
\begin{align*}
    \E[u(a_t^i,\omega)] \geq \E[u(a_{t-1}^j,\omega)],
\end{align*} 
since $i$ can always imitate $j$ \citep[similar arguments are used in][]{sorensen1996rational,smith2000pathological,gale2003bayesian, golub2017learning}.
%\footnote{This is otherwise known as the \emph{principle of improved welfare} in \cite{sorensen1996rational}.} 
We show that for strategic agents, in equilibrium, $i$'s actions are never much worse than $j$'s, even though $i$ may choose myopically sub-optimal actions. To formalize this we denote by $\bar{u}=\E[u(a^\omega,\omega)]$ the expected utility of a decision maker that knows the state. For any $i$ and $t$ it holds that $\E[u(a^i_t,\omega)] \leq \bar{u}$. We can think of $\bar{u}-\E[u(a^i_t,\omega)]$ as the loss in flow utility as compared to the first-best. The imitation principle for strategic agents takes the following form:
\begin{align*}
 \bar{u}-\E[u(a_t^i,\omega)] \leq \frac{1}{1-\delta}(\bar{u}-\E[u(a_{t-1}^j,\omega)]).   
\end{align*}
It is obtained by upper-bounding the agent's loss by the loss he would obtain if guessing correctly in every future period and observing that this loss must be less than the loss obtained by taking the action agent $j$ took last period in every future period.
By Claim~\ref{lem:W} in the Appendix, this upper bound on the loss implies that there is some constant $c > 0$ such that the probability of choosing incorrectly is bounded:
\begin{align*}
 \Prob[a_t^i \neq a^\omega] \leq \frac{c}{1-\delta}\Prob[a_{t-1}^j \neq a^\omega].   
\end{align*}
 One can easily see that when $\delta =0$, the above equation coincides with the imitation principle for myopic agents.

%where $\max_\ss \abs*{ \log\frac{\mu_{\mathfrak{g}}(\ss)}{\mu_{\mathfrak{b}}(\ss)}}$ is the maximum log-likelihood ratio of any signal realization

%In this section, we provide two lemmas that are essential in proving our main result, Theorem \ref{main theorem}, and then explain how the theorem is proved. The first result (Lemma \ref{lemma:private_signal}) shows that the precision of agents' private signal information, as measured in terms of log-likelihood ratios, increases at most linearly with time. The second result (Lemma \ref{prop1}) examines the relationship between the speed of learning in networks and the asymptotic rate at which the precision of social information increases: roughly speaking, a high learning speed requires the precision of the social information to increase quickly over time. 

\subsection{Social and Private Beliefs}
We analyze the agents' beliefs by decomposing their likelihoods into the private and social parts. Recall that by \eqref{eq:beliefs}, the posterior likelihood of agent $i$'s at time $t$, $L^{i, \fg, \mathfrak{f}}_t$, is equal to $Q^{i, \fg, \mathfrak{f}}_t + P^{i, \fg, \mathfrak{f}}_t$, the sum of the social and the private likelihoods. We are interested in the sign of $L^{i, \fg, \mathfrak{f}}_t$ as it determines the corresponding myopic action: $m^i_t$ equals $\mathfrak{g}$ if $L^{i, \fg, \mathfrak{f}}_t \geq 0$, and  $\mathfrak{f}$ otherwise. Let us first focus on the first component of $L^{i, \fg, \mathfrak{f}}_t$: agent $i$'s social likelihood $Q^{i, \fg, \mathfrak{f}}_t$. In the following lemma, we establish a relationship between the equilibrium speed of learning and the precision of the social likelihood, which is crucial in proving our main theorem.

%Intuitively, $P_t^i$ captures the difference between what $i$ knows about the state and what an outside observer who observes $i$'s and his neighbors action would know about the state. As the likelihood assigned to $i$'s private signals is bounded by $M t$ it is intuitive that the same bound applies to the difference between the overall likelihood $L_t^i = P_t^i + S_t^i$ and the social likelihood $S_t^i$.

% Next, we establish a relationship between the speed of learning in networks and the precision of the social likelihood. %In particular, Lemma \ref{prop1} shows that if the probability of agent $i$ making mistakes decays exponentially at a high rate, then conditioned on the state, the social likelihood $S^i_t$ is asymptotically increasing (or decreasing) at a rate that is at least higher (or lower) than a large linear rate.
\begin{lemma}  \label{prop1}
Suppose that the equilibrium speed of learning is at least $r$.  Then, conditioned on  $\omega=\fg$, it holds for any $\ff \neq \fg$
\begin{align*}
    \liminf_{t \to \infty}\frac{1}{t}Q^{i,\fg,\ff}_t \geq r \quad \text{almost surely.}
\end{align*}
\end{lemma}

%Note that the hypothesis of this lemma implies that the speed of learning of agent $i$ is at least $r$. Thus, 
This lemma states that a high learning speed implies that the social information inferred from a given agent $i$ and his neighbors' actions must become precise at a high speed. Intuitively, if agents learn quickly, then their actions provide very precise social information. %This result is crucial in proving our main theorem as it connects the equilibrium learning rate in a social network to the precision of social information. 

The proof of Lemma~\ref{prop1} uses the idea of a fictitious outside observer who observes the same social information as agent $i$ and nothing else. Since he observes $i$'s actions, he can achieve the same learning speed as $i$.  This implies that his posterior likelihood increases fast. Hence, as this outside observer's posterior likelihood coincides with agent $i$'s social likelihood, the precision of $i$'s social information increases at a speed of at least $r$.

Next, we focus on the second component of $L^{i,\fg,\ff}_t$: agent $i$'s private likelihood, $P_t^{i,\fg,\ff}$. As agents receive more independent private signals over time, their private information about the state becomes more precise. However, the precision of their private information cannot increase without bound, as shown in the following lemma. 

\begin{lemma}\label{lemma:private_signal}
For any agent $i$ at time $t$ and any distinct $\fg,\ff \in \cS$, the absolute value of the private likelihood is at  most $t \cdot M$, i.e. \[
\frac{1}{t}|P_t^{i,\fg,\ff}|\leq M \quad\text{almost surely.}
\]
\end{lemma}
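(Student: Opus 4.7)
The plan is to apply Bayes' rule to the conditional likelihood $\Prob[s_1^i,\ldots,s_t^i\mid H_t^i,\Theta]$ and thereby split $P_t^i$ into a direct signal contribution and a correction arising from the dependence of $H_t^i$ on those same signals.

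By Bayes together with the conditional independence of signals given the state,
\[
\Prob[s_1^i,\ldots,s_t^i\mid H_t^i,\Theta] \;=\; \prod_{k=1}^{t}\mu^i_{k,\Theta}(s_k^i)\cdot\frac{\Prob[H_t^i\mid s_1^i,\ldots,s_t^i,\Theta]}{\Prob[H_t^i\mid\Theta]}.
\]
Taking the log-ratio between $\Theta=\mathfrak{g}$ and $\Theta=\mathfrak{b}$ gives $P_t^i = \Lambda_t^i + D_t^i$, where $\Lambda_t^i := \sum_{k=1}^t \log(\mu^i_{k,\mathfrak{g}}(s_k^i)/\mu^i_{k,\mathfrak{b}}(s_k^i))$ and $D_t^i$ is the log-ratio of the two Bayes correction factors. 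The definition of $M$ in \eqref{eq:def-M} places each summand of $\Lambda_t^i$ in $[-M/2,M/2]$ almost surely, so $|\Lambda_t^i|\le tM/2$.

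To bound $|D_t^i|\le tM/2$ I would telescope over time. The key structural observation is that $H_t^i$ depends on $(s_1^i,\ldots,s_t^i)$ only through $i$'s own past actions $a_1^i,\ldots,a_{t-1}^i$, since $i\in N_i$ but no other agent observes $i$'s signals except by inference from $i$'s actions. Chain-ruling $\Prob[H_t^i\mid\cdot,\Theta]$ step by step and using that, conditional on $(H_s^i,a_s^i,\Theta)$, the neighbors' time-$s$ actions $\{a_s^j\}_{j\in N_i\setminus\{i\}}$ are independent of $s^i_{\le t}$ (their indirect dependence on $i$'s signals flows through $a^i_{\le s}$, which is already in the conditioning), the neighbor contributions cancel in the per-step Bayes factors. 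This reduces the correction to
\[
D_t^i \;=\; -\sum_{s=1}^{t-1}\log\frac{\Prob[a_s^i\mid H_s^i,\mathfrak{g}]}{\Prob[a_s^i\mid H_s^i,\mathfrak{b}]}.
\]
Each summand is the LLR of the binary action $a_s^i$ from the viewpoint of an outside observer of $H_s^i$, and I would bound it by $M/2$ via a data-processing argument: conditional on $H_s^i$, the genuinely new source of randomness driving $a_s^i$ is the signal $s_s^i$ (older signals being already encoded in $H_s^i$ through $a^i_{<s}$), and the bound $M/2$ on the single-signal LLR lifts to any function of $s_s^i$. Summing across the $t-1$ steps gives $|D_t^i|\le (t-1)M/2\le tM/2$, which combined with $|\Lambda_t^i|\le tM/2$ yields $|P_t^i|\le tM$ almost surely.

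The hard part is justifying the per-step bound $|\log\Prob[a_s^i\mid H_s^i,\mathfrak{g}]/\Prob[a_s^i\mid H_s^i,\mathfrak{b}]|\le M/2$. A direct data-processing argument is not immediate because the conditional law of $s^i_{\le s}$ given $H_s^i$ already carries state-dependent information. Making the per-step bound rigorous will require isolating the marginal effect of the new signal $s_s^i$, for instance by conditioning on a sufficient statistic of $(s^i_{<s},H_s^i)$ for the action history $a^i_{<s}$, so that the remaining randomness in $a_s^i$ is driven only by $s_s^i$ and the single-signal LLR bound applies.
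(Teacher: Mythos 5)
Your decomposition $P_t^i=\Lambda_t^i+D_t^i$ is exactly the one the paper uses, and the bound $|\Lambda_t^i|\le tM/2$ is handled identically. The gap is in your treatment of $D_t^i$: the per-step bound $\bigl|\log\bigl(\Prob[a_s^i\mid H_s^i,\mathfrak{g}]/\Prob[a_s^i\mid H_s^i,\mathfrak{b}]\bigr)\bigr|\le M/2$ is false in general, and no amount of conditioning on sufficient statistics will rescue it. The action $a_s^i$ is a function of the \emph{entire} signal history $s^i_{\le s}$, and conditional on $H_s^i$ the residual information in $s^i_{<s}$ not yet revealed by $a^i_{<s}$ can carry a log-likelihood ratio of order $sM/2$. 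Concretely: if an agent plays a constant action at times $1,\dots,s-1$ and at time $s$ plays $\mathfrak{g}$ iff every signal so far took its most $\mathfrak{g}$-indicative value, then $\Prob[a_s^i=\mathfrak{g}\mid H_s^i,\Theta]$ is a product of $s$ single-signal probabilities and the action's LLR equals $sM/2$. So bounding the sum term by term by $M/2$ cannot work (and would anyway only be needed because you unpacked a quantity that is better left packed).

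The repair is to exploit the telescoping structure you set up rather than fight it. Writing $\Prob[a_s^i=a\mid H_s^i,\Theta]=\mu^{i,s}_{\Theta}(\Omega^i_{H^i_s,a})/\mu^{i,s-1}_{\Theta}(\Omega^i_{H^i_s})$, where $\Omega^i_{H^i_s}$ is the set of own-signal sequences consistent with the observed history (this is where your ``neighbor contributions cancel'' step lives: one must check that the set of full signal profiles consistent with $H^i_s$ is a \emph{product} of an own-signal rectangle and an others'-signal rectangle, which holds for pure strategies by induction on time), the sum over $s$ collapses to
\[
D_t^i=\log\frac{\mu^{i,t-1}_{\mathfrak{b}}(\Omega^i_{H^i_t})}{\mu^{i,t-1}_{\mathfrak{g}}(\Omega^i_{H^i_t})}.
\]
This is the LLR of a \emph{single} event determined by $t-1$ signals, and since the density ratio of the two product measures is pointwise in $[\ee^{-M(t-1)/2},\ee^{M(t-1)/2}]$, the same holds for the ratio of the measures of any set, giving $|D_t^i|\le M(t-1)/2$ in one step. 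This is precisely the paper's argument; your individual summands are genuinely unbounded by $M/2$, but their sum is controlled because large per-step revelations at time $s$ are exactly compensated by the information \emph{not} having been revealed at earlier times.
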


This lemma states that that at any given time $t$, there is an upper bound to the precision of agents' private information, which only depends on the private signals distribution and is independent of the structure of the network and the history of observed actions. Notice that since $P^{i,\fg,\ff}_t = L^{i,\fg,\ff}_t - S^{i,\fg,\ff}_t$ by  \eqref{eq:beliefs}, it  captures the difference between what $i$ knows about the state and what an outside observer who observes $i$'s actions and his neighbors' actions would know about the state. Thus, the bound $M t$ assigned to $i$'s private signals also applies to the difference between the posterior likelihood $L^i_t$ and the social likelihood $S^i_t$.

\subsection{Proof sketch for Theorem~\ref{main theorem}}

We end this section by providing a sketch of the proof of Theorem~\ref{main theorem} using our earlier results. Suppose to the contrary that in equilibrium, agents learn at a speed that is strictly higher than $M$, where $M$ is twice the log-likelihood ratio of the strongest signal. Then, by Lemma~\ref{prop1}, the social information would become precise at a speed that is also strictly higher than $M$. Meanwhile, at any given time $t$, the precision of the private information is at most $Mt$, as shown in Lemma~\ref{lemma:private_signal}. Hence, by  \eqref{eq:beliefs}
the sign of $L_t^i$ would be determined purely by the social information after some (random) time.
By Lemma~\ref{lemma:strategic-myopic}, there exists a time $T$ such that from $T$ onward, in equilibrium, all agents act only based on the social information, and furthermore they would choose the myopic action even though they are forward-looking. Consequently, their actions would no longer reveal any information about their private signals and information would cease to be aggregated. This contradicts our hypothesis that the precision of the social information grows at such a high speed. Therefore, we conclude that the equilibrium speed of learning in networks does not increase beyond $M$.

\section{Networks which are not Strongly Connected}\label{sec:not-strongly-connected}
So far, we have focused on strongly connected networks where there is an observational path between every pair of agents. While on a strongly connected network all agents learn at the same speed, this is not true for general networks. For example, consider a simple star network where there is a single agent at the center who observes everyone, and where the remaining peripheral agents observe no one. Here, the peripheral agents' actions are independent conditional on the state. These actions supply the central agent with $n-1$ additional independent signals, and he thus learns at a speed that increases linearly with the number of agents.\footnote{See Theorem 5 in \cite{harel2021rational}.} In contrast, all peripheral agents learn at a constant speed $r_a$, as in the single-agent case. Hence, for general networks, depending on the structure of the network, some agents can learn faster than others.

More importantly, this star network example implies that the bound obtained in Theorem~\ref{main theorem} does not hold for all agents in a non-strongly connected network. The intuitive reason is that when the network is not strongly connected, some agents might remain unobservable to others, e.g., the central agent in the star network. These agents can thus learn very fast from observing others since their own past actions do not affect the actions of others, rendering others' past actions more informative. This cannot happen in a strongly connected network where every agent (potentially indirectly) learns about the actions of every other agent.

Nevertheless, even though it is not necessarily true that all agents learn slowly, our next result establishes that it is still true that some agents will learn slowly in any network.%\footnote{In a directed Erdős–Rényi graph with $n$ agents, when the expected number of neighbors is $\ln n+c$ for large $c$ then with high probability the whole graph will be strongly connected, and so our main result applies (see Theorem 5 in \cite{graham2008note}).} 

\begin{proposition}\label{prop:bound-minimal-learning-speed}
Consider an arbitrary (finite) network and let $r_i$ be the speed of learning of agent $i$.
We have that $\min_i r_i \leq M$.%, where $M = 2 \cdot \max_{\ss \in \sig} \abs*{ \log\frac{\mu_{\mathfrak{g}}(\ss)}{\mu_{\mathfrak{b}}(\ss)}}$.
\end{proposition}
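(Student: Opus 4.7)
The plan is to reduce the general network case to the strongly connected case that is already handled by Theorem~\ref{main theorem}. The key observation is that any finite directed graph has a sink in its condensation into strongly connected components (SCCs), and a sink SCC looks from the inside like a strongly connected network unto itself.

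More concretely, view the social network as a directed graph with an edge from $i$ to $j$ whenever $j\in N_i$, and pass to the condensation, which is a finite DAG. Pick any sink SCC $C\subseteq N$; by definition of a sink, $N_i\subseteq C$ for every $i\in C$, and the subgraph induced on $C$ is strongly connected. I would then argue that the restricted profile $\sigma|_C=(\sigma^i)_{i\in C}$ of the given equilibrium $\sigma$ is itself an equilibrium of the sub-game played on $C$ with the same signal distributions. For every $i\in C$, the information set $I_t^i=(s_1^i,\ldots,s_t^i,H_t^i)$ only involves signals of $i$ and actions of neighbors in $N_i\subseteq C$, and the flow utility $U_t^i(\sigma)=\mathbbm{1}(a_t^i=\Theta)$ depends only on $i$'s own action and the state. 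Hence both the information available to $i$ and the expected utility $u^i(\sigma)$ are unaffected by the strategies of agents outside $C$, so any deviation $\tau^i$ that strictly improves $i$'s payoff in the sub-game would also strictly improve his payoff in the original game, contradicting the equilibrium property of $\sigma$. This is the step that I expect to be the main obstacle to state crisply: one must emphasize that the purely informational, externality-free nature of payoffs rules out any subtle signaling by agents in $C$ towards agents outside $C$, even though the latter may well observe the former.

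Finally, I would apply Theorem~\ref{main theorem} to the sub-game on $C$. The sub-network is strongly connected; the signal bound \eqref{eq:def-M} for the restricted game is
\[
M_C=2\sup_{i\in C,\,t,\,\omega\in\Omega_t^i}\abs*{\log\frac{\mu^i_{t,\mathfrak{g}}(\omega)}{\mu^i_{t,\mathfrak{b}}(\omega)}}\leq M,
\]
and the sub-game equilibrium $\sigma|_C$ induces the same joint distribution of $(a_t^i,\Theta)_{i\in C,\,t\geq1}$ as $\sigma$ does on these coordinates, so $r_i$ is identical in the two games. Theorem~\ref{main theorem} applied to the sub-game yields that the common equilibrium speed of learning of agents in $C$ is at most $M_C\leq M$. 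Choosing any $i^\star\in C$ gives
\[
\min_{i\in N} r_i\;\leq\;r_{i^\star}\;\leq\;M,
\]
which is the desired bound.
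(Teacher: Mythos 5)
Your proof is correct and follows essentially the same route as the paper's: both locate a closed, strongly connected sub-network (your sink SCC is exactly the paper's minimal set $E$ with $\cup_{i \in E} N_i \subseteq E$), argue that the learning dynamics of agents in that set are unaffected by the agents outside it because information sets and payoffs there involve only members of the set, and then invoke Theorem~\ref{main theorem}. The only difference is that the paper first passes through the set of slowest learners via Lemma~\ref{lem:rate-bound-from-observing}, a step you correctly bypass since $\min_i r_i \leq r_{i^\star}$ holds for any $i^\star$; your justification of why the restricted profile remains an equilibrium (the externality-free payoffs rule out profitable signaling toward outside observers) is if anything more explicit than the paper's.
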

The proof of Proposition \ref{prop:bound-minimal-learning-speed} relies on the idea that within any general network, there is always a  strongly connected sub-network, say $E$, in which no agent observes any agent outside of this sub-network. Thus, the learning process  at  $E$ is independent of the agents outside of $E$. Since $E$ is strongly connected, Theorem \ref{main theorem} implies that the speed of learning on $E$ is bounded by $M$.

\subsection{Learning on a Line}

In this section we discuss the commonly studied case of an infinite group of agents who learn by observing (a subset) of their predecessors. A prominent feature of the line network is that information is transmitted unidirectionally. This simplified observational structure has thus received particular attention in the herding literature where agents act only once. Here, we extend it to our setting where agents act repeatedly.  As we will see below, the arguments we use on line networks are reminiscent of the sequential social learning literature, except that we focus on the speed of learning rather than whether learning occurs or not.

%In this section, we consider networks with no observation cycles. I.e., the only strongly connected subnetworks are singletons. Assuming that each agent observes finitely many others, this is equivalent to assuming that agents are ordered on a line and only observe a subset of their predecessors. 
%Since observations are unidirectional, agents behave myopically.%We focus on the case where private signals are i.i.d.\ over time and agents.

We first consider the case where agents observe a general subset of their predecessors, which in the herding literature was considered in \cite{acemoglu2011bayesian}.\footnote{Their conclusion is qualitatively different: For some network structures asymptotic learning obtains, and for some it does not. Of course, their notion of learning (namely that agent $i$ takes the correct action with probability that tends to 1 as $i$ tends to infinity) is very different than ours.}

\begin{proposition} \label{prop:inf_nocycle}
Suppose that $N=\{1,2,\ldots\}$ and $N_i \subseteq \{1,\ldots,i\}$.
Then there is some constant $K$ such that the speed of learning $r_i \leq K$ for infinitely many agents $i$.
\end{proposition}
That is, under these assumptions, it is impossible that the speed of learning $r_i$ tends to infinity with $i$. Thus, even though there are infinitely many agents, many agents will have a low speed of learning. 
This result thus shows that in the line networks, the conclusion of Proposition~\ref{prop:bound-minimal-learning-speed} that the speed of learning of some agent must be bounded generalizes to the conclusion that the speed of an infinite number of agents must be bounded.

The proof of this proposition uses ideas that are similar to those of the proof of Theorem~\ref{main theorem}.
Since observations are unidirectional, agents behave myopically even for $\delta>0$.
To show the result, suppose towards a contradiction, that there is no such $K$. Then, in particular, there are only finitely many agents whose speed of learning is less than $M$. Each of the remaining agents eventually stops using their private signals, because the fact that they learn so quickly means that they observe very strong social information. It follows that the only information that is aggregated asymptotically is that of the finite group of agents who learn slowly. Thus, it is impossible that any agent has a high speed of learning, which is a contradiction.
\begin{comment}
\subsection{Directed regular trees} 
Next, we consider a subclass of networks without observation cycles but with more structures, namely, a directed $d$-regular tree. This is a (finite) tree graph in which each agent observes $d$ others, except for the agents in the last layer who observe no-one.% where each node $i$ has only one `parent' who observes $i$.
%Each parent observes their $d$ children nodes while these children nodes neither observe the parent node nor each other. 

Formally, let $T^l_d$ be the directed $d$-regular tree of depth $l$ in which agents are arranged in $l <\infty$ levels. The level-1 agents observe no one and each agent in each of the other levels observes $d$ agents in the level below them. There are no cycles (directed or not directed) in this graph. Thus $T_1^\ell$ is a directed line of $\ell$ agents, $T_2^\ell$ is a binary tree of depth $\ell$, with a total of  $2^{\ell}-1$ agents, and in general, for $d\geq 2$, $T_d^\ell$ has order of $d^\ell$ agents. Denote by $r_d^\ell$ the speed of learning of the agents in level $\ell$. I.e., if agent $i$ is at level $\ell$ then
\begin{align*}
r_d^l := \liminf_{t \to \infty} -\frac{1}{t}\log \Prob[a^i_t \neq \omega].
\end{align*}
Note that the speed of learning for all level-1 agents is independent of $d$ and equal to the rate of learning of a single agent, i.e., $r_d^1 = r_a$ for all $d\geq 1$. 

\subsubsection{The case $d=1$}
\end{comment}

We now consider the special case of $N_i=\{i-1,i\}$, i.e., each agent observes only their direct predecessor.
%We first consider $T^{l}_1$, a directed line. From Proposition \ref{prop:inf_nocycle}, we know that there is a uniform upper bound on the learning speed of infinitely many agents on this directed line. It turns out that this uniform upper bound coincides with the upper bound in Theorem \ref{main theorem}. That is, we show that on a directed line graph, the maximum learning speed among all agents is still bounded above by $M$. 

\begin{proposition}\label{prop:chain}
    When $N_i=\{i-1,i\}$,  $r_i \leq M$ for all agents $i$.
\end{proposition}
Thus, the conclusion of Theorem~\ref{main theorem} applies also to this case of a non-strongly connected network.
Again, we prove Proposition \ref{prop:chain} by using the ideas behind Theorem \ref{main theorem}. Notice first that the imitation principle for myopic agents implies that the speed of learning is weakly increasing in the index  of the agent. Now, suppose to the contrary that there exists some agent $k$ who learns at a speed that is strictly greater than $M$. Then all agents $j>k$ would also learn at a high speed. Eventually, all these high-speed learning agents would stop using their private signals because the fact that they learn so fast means that they observe very precise social information. Consequently, information aggregation stagnates and thus learning cannot be too fast.

\section{Conclusion}
%A large portion of the social learning on networks literature has focused on the question of whether learning occurs. We take the next natural step and ask how fast learning is when it does occur. To study this, we consider long-lived agents who repeatedly interact with each other. The information agents have in each period includes their own private signals and all of their neighbors' past actions. Based on this information, agents choose one out of two possible actions to match the state. Over time, information accumulates, and all agents learn the truth. Therefore, our main focus is on the speed of learning.

In this paper, we show that information aggregation is highly inefficient for large groups of agents who learn from private signals and by observing their social network neighbors.
To overcome the difficulty of constructing equilibria explicitly, we focus on the asymptotic speed of learning, allowing us to prove results that apply to all equilibria.
We show that regardless of the size of the network, the speed of learning is bounded above by a constant, which only depends on the private signal distribution (and not on the discount factor, the observational graph, the agents' utilities, or prior belief).

An important limitation of our results is that they only apply asymptotically. As our numerical results show, these asymptotic results can apply already from the early periods, for myopic agents on particular networks. However, for patient agents, it is unclear whether it takes a long time for the asymptotic results to apply. Calculating welfare for patient agents seems beyond what is currently tractable, as it would require a detailed analysis of the equilibria of this game. In fact, even for two myopic agents who observe each other, it seems intractable to calculate welfare, and moreover, it remains unknown whether the asymptotic speed of learning is strictly greater than that of one agent who learns on his own. Nevertheless, even in the most general settings of patient agents on complex networks, it seems reasonable to conjecture that the main economic force driving our result\textemdash that learning cannot be too fast because it would lead agents to disregard their own signals\textemdash is  significant even in the early periods.  We leave this for future research. 
%\footnote{Indeed, as pointed out by \cite{harel2021rational}, even for two myopic agents who observe each other, the problem seems to be intractable.} 

Another promising direction for future research is the calculation of lower bounds for the speed of learning. Currently, we cannot show that equilibrium speed of learning on any connected network is faster than that of a single agent. Even without imposing equilibrium, this question remains open: What learning speed can be achieved when a social planner is allowed to choose the agents' strategies? For the complete network, it is still unknown how a social planner could achieve any speed of learning that is better than the speed achieved by a single agent. The challenge for  the social planner lies in the trade-off between using the actions to communicate between the agents and choosing the correct actions with very high probability at the same time. One conjecture from \cite{harel2021rational} is that a better speed can be achieved by having the social planner instruct the agents to behave as if they are myopic and over-weight their own signals, causing their actions to reveal more information. In simpler, sequential settings, this type of mechanism was shown to indeed improve learning outcomes \citep{arieli2023hazards}.

It is possible to extend our model in a number of directions. In Appendix~\ref{app:random-time} we show that our main result continues to hold if agents do not observe each other every period, but only in some (potentially random) periods. A natural extension, which we leave for future work, is to allow the network to be random and its realization to be only partially observed by the agents. An interesting technical question is that of the robustness of our results to the assumption of bounded private signals. We conjecture that a result similar to our main theorem should hold even if signals are unbounded, with the Kullback-Leibler divergence between the conditional signal distributions playing the bounding role currently played by the maximum log-likelihood ratio. This is indeed the case for myopic agents on the complete network \citep{harel2021rational}. A substantive extension is to allow the underlying state to change over time \citep[see][]{moscarini1998social, frongillo2011social, dasaratha2020learning}. For example, the underlying unknown state could capture the quality of a local restaurant or school, which might fluctuate gradually. In such a setting, one could replace the speed of learning metric with the long-run probability of making the correct choice and study whether information gets aggregated in this case and, if so, whether the information aggregation process is efficient.

%\footnote{\label{footnote:unbounded}An interesting technical question is that of the robustness of our results to the assumption of bounded private signals. We conjecture that a result similar to our main theorem should hold even if signals are unbounded, with the Kullback-Leibler divergence between the conditional signal distributions playing the bounding role currently played the maximum log-likelihood ratio. This is indeed the case for myopic agents on the complete network \citep{harel2021rational}.}

%For example, when the state captures the quality of a local restaurant or school, which might fluctuate, agents can learn from their neighbors' past choices. In such a setting, one could replace the speed of learning metric with the long-run probability of making the correct choice.

%Thus, one could ask a similar question about the speed of learning in networks with a dynamic state rather than a static state. We leave this extension for future research.

\appendix
\section{Proofs}
 \begin{proof}[Proof of Lemma \ref{lemma:strategic-myopic}]
(i) First notice that we can write the agent's expected utility conditioned on his information $I^i_t$ at time $t$ as the sum $U_{<t}+\delta^{t} U_{\geq t}$, where
\begin{align*}
    U_{<t}=(1-\delta)\sum_{k=1}^{t-1}\delta^{k-1} \E[u(a^i_{k},\omega)|I^i_t]
\end{align*}
is the sum of the expected flow utilities until time $t$, and $U_{\geq t}$ is the expected continuation utility at time $t$ given by
\begin{align*}
    U_{\geq t} 
    &= (1-\delta)\sum_{k=0}^\infty \delta^{k} \E[u(a^i_{t+k},\omega)|I^i_t]\\ 
    &= (1-\delta)\E[u(a^i_t,\omega)|I^i_t] + \delta(1-\delta)\sum_{k=0}^\infty \delta^{k} \E[u(a^i_{t+k+1},\omega)|I^i_t].
\end{align*}

Fix some state $\fg \in \cS$. Since $a^\fg$ is the unique optimal action in state $\fg$, by applying an affine transformation to the flow utility function $u\colon \AA\times\cS \to \R$ we can assume that $u(a^\fg,\fg)=1$, that $u(a,\fg) \leq 0$ for all $a \neq a^\fg$. Let $c^\fg =\max_{a,\ff}|u(a,\ff)|$. Recall that $p^{i,\fg}_t=\Prob[\omega=\fg|I^i_t]$ is the agent's posterior at time $t$. Thus, for any action $a \neq a^\fg$, since $u(a,\fg) \leq 0$, the expected flow utility $\E[u(a,\omega)|I^i_t] = \sum_{\ff \in \cS} u(a, \ff) \cdot p^{i,\ff}_t$ is at most $c^{\fg} (1-p^{i,\fg}_t)$. Likewise, $\E[u(a^i_t,\omega)|I^i_t]$ is at most $p^{i,\fg}_t+c^{\fg}(1-p^{i,\fg}_t)$ since $u(a^{\fg}, \fg) =1$.

Now, suppose that $a_t^i = a \neq a^\fg$. Then the expected continuation utility is
\begin{align*}
    U_{\geq t} 
    &= (1-\delta)\E[u(a,\omega)|I^i_t] + \delta(1-\delta)\sum_{k=0}^\infty \delta^{k} \E[u(a^i_{t+k+1},\omega)|I^i_t]\\
    &\leq (1-\delta)c^{\fg} (1-p^{i,\fg}_t) + \delta(p^{i,\fg}_t+c^{\fg}(1-p^{i,\fg}_t))\\
    &= \delta p^{i,\fg}_t+c^{\fg}(1-p^{i,\fg}_t).
\end{align*}
On the other hand, the strategy that chooses $a^\fg$ from period $t$ onward has an expected continuation utility at least  $p^{i,\fg}_t  -c^{\fg} (1-p^{i,\fg}_t)$. Thus, when
\begin{align}
\label{eq:myopic-strategic}
    p^{i,\fg}_t  -c^\fg (1-p^{i,\fg}_t) > \delta p^{i,\fg}_t+c^{\fg}(1-p^{i,\fg}_t)
\end{align}
the agent cannot choose $a^i_t= a \neq a^\fg$ in period $t$. Rearranging, this happens when
\begin{align*}
    p^{i,\fg}_t > \frac{2c^\fg}{2c^\fg+1-\delta}.
\end{align*}
Thus, under the above condition, agents choose $a^i_t=a^\fg$ in equilibrium. Clearly, the myopic action is then also equal to $a^\fg$, as this corresponds to the case $\delta=0$.
Part (i) of the lemma now follows by setting $c=\max_\fg 2c^\fg$.

For part (ii), fix a discount factor $\delta\in [0, 1)$ and let $\sigma$ be an equilibrium. Let $a^i_t$ be the action taken by $i$ at time $t$ under $\sigma$. Since the entire sequence of private signals reveals the state, conditional on $\omega=\fg$, $\lim_t p^{i,\fg}_t=1$ almost surely. Hence, 
by part (i) the agent will choose the unique optimal action that is also myopically optimal in equilibrium from some (random) time on. Since there are finitely many agents, this will hold for all $i$ for all $t$ larger than some (random) $T$. This means that from $T$ onward, in equilibrium all agents will behave myopically.
\end{proof}

The following is a simple consequence of Lemma~\ref{lemma:strategic-myopic}.
\begin{corollary}
\label{cor:strategic-myopic}
There is a constant $C>0$ such that, in any equilibrium, if it holds for $\fg \in \cS$ and all $\ff \neq \fg$ that $L^{i,\fg,\ff}_t > C$, then $a^i_t = m^i_t = a^\fg$.
\end{corollary}
\begin{proof}
Fix a state $\fg \in \cS$. Suppose that $L^{i,\fg,\ff}_t > C$ for some $C>0$ to be chosen later. Then $p^{i,\fg}_t > \ee^C p^{i,\ff}_t$. If this holds for all $\ff \neq \fg$ then 
\begin{align*}
    p^{i,\fg}_t > \frac{1}{|\cS|-1}\ee^C \sum_{\ff \neq \fg}p^{i,\ff}_t = \frac{1}{|\cS|-1}\ee^C (1-p^{i,\fg}).
\end{align*}
Thus for any each $c>0$, $u$ and $\delta \in [0,1)$, for $C$ large enough it holds that
\begin{align*}
    p^{i,\fg}_t > \frac{c}{c+1-\delta},
\end{align*}
and the result follows by part (i) of Lemma~\ref{lemma:strategic-myopic}.
\end{proof}

The following lemma will be useful in the proofs below. Recall that we denote $\bar{u} = \E[u(a^\omega,\omega)]$.
\begin{claim}
\label{lem:W}
     There exist $\underline{c},\overline{c}>0$ such that
    \begin{align*}
 \underline{c}\cdot\Prob[a^i_t \neq a^\omega] \leq        \bar{u}-\E[u(a^i_{t},\omega)] \leq \overline{c}\cdot\Prob[a^i_t \neq a^\omega]
    \end{align*}
\end{claim}
Recall that we can think of the difference $\bar{u} - \E[u(a^i_t,\omega)]$ as the expected loss in flow utility as compared to the first-best. The lemma above states that this quantity is the same---up to constants---as the probability of choosing the correct action. An immediate consequence of this lemma is that we can express the speed of learning in terms of this loss:
\begin{align}
\label{eq:r-w}
    r_i = \liminf_t -\frac{1}{t}\log\Prob[a^i_t \neq a^\omega] = \liminf_t -\frac{1}{t}\log(\bar{u}-\E[w(a^i_t, a^\omega)]).
\end{align}
\begin{proof}[Proof of Claim~\ref{lem:W}]
Denote by $\underline{c}$ the minimum loss of utility from choosing incorrectly:
\begin{align*}
    \underline{c} = \min_{\fg \in \cS, a \neq a^\fg}u(a^\fg,\fg)-u(a,\fg).
\end{align*}
Since there is a unique optimal action in each state we have that $\underline{c}>0$. Analogously, denote by $\overline{c}>0$ the maximum such loss:
\begin{align*}
    \overline{c} = \max_{\fg \in \cS, a \neq a^\fg}u(a^\fg,\fg)-u(a,\fg).
\end{align*}
Then
\begin{align*}
    \bar{u}-\E[u(a^i_{t},\omega)] 
    = \E[u(a^\omega,\omega)-u(a^i_{t},\omega)] 
    \leq \E[\overline{c}\cdot 1_{a^i_t \neq a^\omega}]
    = \overline{c}\cdot\Prob[a^i_t \neq a^\omega].
\end{align*}
Likewise,
\begin{align*}
    \bar{u}-\E[u(a^i_{t},\omega)] 
    = \E[u(a^\omega,\omega)-u(a^i_{t},\omega)] 
    \geq \E[\underline{c}\cdot 1_{a^i_t \neq a^\omega}]
    = \underline{c}\cdot\Prob[a^i_t \neq a^\omega].
\end{align*}
\end{proof}

\begin{proof}[Proof of Lemma \ref{lem:rate-bound-from-observing}]

Suppose $i$ observes $j$. Let $\sigma$ be an equilibrium and let $a^i_t$ be the action taken by $i$ at time $t$ under $\sigma$. We claim that for $t>1$, 
\begin{align}
\label{eq:imitation}
(1-\delta)(\bar{u}-\E[u(a_t^i,\omega)]) \leq \bar{u}-\E[u(a_{t-1}^j, \omega)].
\end{align}
To see that this equation must hold observe that the left-hand side equals the expected continuation loss the agent would have from time $t$ on if he chooses the action $a_t^i$ at time $t$ and suffered no loss in future periods. Hence this is smaller than the expected continuation loss under the strategy profile $\sigma$.  In equilibrium, this must be smaller than the loss from any deviation, and the right-hand-side equals the loss the agent suffers when imitating $j$'s action $a_{t-1}^j$ from time $t$ onward.
Thus the above inequality must hold.

As a consequence, 
\begin{align*}
   \liminf_{t \to \infty} -\frac{1}{t} \log (\bar{u}-\E[u(a_t^i,\omega)])
    &\geq \liminf_{t \to \infty} -\frac{1}{t} \log\left(\frac{1}{1-\delta} (\bar{u}-\E[u(a_{t-1}^j,\omega)])\right)\\  
    &=\liminf_{t \to \infty} -\frac{1}{t} \log (\bar{u}-\E[u(a_{t-1}^j,\omega)])\\
    &=\liminf_{t \to \infty} -\frac{1}{t} \log (\bar{u}-\E[u(a_{t}^j,\omega)]).
\end{align*}
Thus part (i) follows from \eqref{eq:r-w}. Part (ii) follows as in any strongly connected network, there is an observational path from each agent $i$ to each other agent $j$ and the monotonicity of learning speed shown in (i) applied along this path, implies that $r_i \geq r_j$. The opposite inequality holds by the same argument. 
\end{proof}

The next simple claim will be helpful in the Proof of Lemma \ref{prop1}.
\begin{claim}
\label{lemma:markov}
     Suppose that $X_t$ is a sequence of random variables taking values in $[0,1]$ such that $\lim_t -\frac{1}{t}\log\E[X_t|\Sigma_t] \geq r$ almost surely, for some  sequence of sigma-algebras $\Sigma_t$.Then $\liminf_t -\frac{1}{t}\log X_t \geq r$, almost surely.
\end{claim}
\begin{proof}[Proof of Claim \ref{lemma:markov}]
By the claim hypothesis there exists a random $F\colon \mathbb{N} \to \R$, such that $\lim_t F(t)/t=0$ almost surely and such that 
\[
  \E[X_t|\Sigma_t] \leq \ee^{-r t +F(t)}.
\]
We can furthermore assume that $F(t) \leq rt$, since $X_t \leq 1$. Taking expectations of both sides yields
\[
  \E[X_t] \leq \ee^{-r t}\cdot\E[\ee^{F(t)}].
\]
Hence, if we denote $f(t) = \log\E[\ee^{F(t)}]$, 
\[
  \E[X_t] \leq \ee^{-r t+f(t)}.
\]
Furthermore,
\begin{align*}
  \lim_t \frac{1}{t}f(t) 
  = \lim_t \frac{1}{t}\log\E[\ee^{F(t)}]
  = \lim_t \log\E[\ee^{F(t)/t}]
  = 0
\end{align*}
where the last equality is a consequence of the facts that $F(t)/t \leq r$ and $F(t)/t$ converges almost surely to zero.
Hence, by Markov's inequality, for any $c_t > 0$,
\[
    \Prob[X_t \geq c_t] \leq \E[X_t]/c_t \leq \ee^{-rt + f(t)}/c_t\,.
\]
Choosing $c_t=\ee^{-rt+f(t)}\cdot t^2$, we get that
\[
    \Prob[X_t \geq \ee^{-rt+f(t)+2\log t}] \leq \frac{1}{t^2}\,.
\]
Hence, by Borel-Cantelli, almost surely $X_t \leq \ee^{-rt+f(t)+2\log t}$ for all $t$ large enough, and in particular, $\liminf_t-\frac{1}{t}\log X_t \geq r$.
\end{proof}

\begin{proof}[Proof of Lemma \ref{prop1}]
In this proof we use Landau notation, so that $o(t)$ stands for some function $f\colon \mathbb{N} \to \mathbb{R}$ such that $\lim_t f(t)/t=0$.

By assumption and the definition of speed of learning in \eqref{def:speed}, $\Prob[a^i_t \neq a^\omega] \leq  \ee^{-rt+o(t)}$.
Let
\begin{align*}
    p^{x,\ff}_t = \Prob[\omega=\ff|a^i_t]
\end{align*}
be the probability assigned to $\omega=\ff$ by an outside observer $x$ that sees only agent $i$'s action at time $t$.

By Bayes' Law,
\begin{align*}
    \Prob[ \omega=\ff | a^i_t=a^\fg] = \frac{\Prob[a^i_t=a^\fg , \omega=\ff]}{\Prob[a^i_t=a^\fg]}.
\end{align*}
Since $\Prob[a^i_t \neq a^\omega] \leq \ee^{-rt+o(t)}$, we can bound the denominator by 
\[
\Prob[a^i_t=a^\fg] \geq \Prob[\omega=\fg,a^i_t=a^\omega] \geq \Prob[\omega=\fg]-\Prob[a^i_t \neq a^\omega] \geq \Prob[\omega=\fg]-\ee^{-rt+o(t)}.
\]
If $\fg \neq \ff$ then the numerator is at most $\Prob[a^i_t \neq a^\omega]$ since the event that the agent takes the wrong action contains the event that the agent takes action $a^\fg$ in state $\ff$. Hence
\begin{align*}
    \Prob[ \omega=\ff | a^i_t=a^\fg] \leq \frac{\ee^{-rt+o(t)}}{\Prob[\omega=\fg]-\ee^{-rt+o(t)}}.
\end{align*}
Now, because $\Prob[a^i_t \neq a^\omega] \leq \ee^{-rt+o(t)}$, by Borel-Cantelli, almost surely $a^i_t = a^\fg$ for all $t$ large enough, conditioned on $\omega=\fg$. It follows that  for all $t$ large enough---again conditioned on $\omega=\fg$---the belief $p^{x,\ff}_t$ will equal $\Prob[ \omega=\ff | a^i_t=a^\fg]$. Hence, by the displayed equation above, $\liminf_t -\frac{1}{t}\log p^{x,\ff}_t \geq r$. Since this holds for all $\ff \neq \fg$, we get that $\lim_t p^{x,\fg}_t=1$. 

Now, let
\begin{align*}
    p^{y,\ff}_t = \Prob[\omega=\ff | H^i_t]
\end{align*}
be the probability assigned to $\omega=\ff$ by an outside observer $y$ that sees only agent $i$'s public history $H_t^i$.

Then
\[
\log \frac{p^{y,\fg}_t}{p^{y,\ff}_t} = \log\frac{\Prob[H^i_t | \omega =\fg]}{\Prob[H^i_t | \omega =\ff]} + \log \frac{\Prob[\omega= \fg]}{\Prob[\omega =\ff]} = Q^{i,\fg,\ff}_t.
\]
Since $H^i_t$ includes $a^i_{t-1}$, the law of total expectations yields that
\begin{align*}
    p^{x,\ff}_{t-1} = \E[p^{y,\ff}_{t}|a^i_{t-1}].
\end{align*}
It now follows from Claim~\ref{lemma:markov} that identical asymptotics apply to $p^{y,\ff}_t$: $\liminf_t -\frac{1}{t}\log p^{y,\ff}_t \geq r$ and $\lim_t p^{y,\fg}_t=1$. Thus, conditioned on $\omega=\fg$,
\begin{align*}
    \liminf_{t \to \infty} \frac{1}{t}Q^{i,\fg,\ff}_t = \liminf_{t \to \infty}\frac{1}{t}\log \frac{p^{y,\fg}_t}{p^{y,\ff}_t} \geq r \quad \text{almost surely}.
\end{align*}

\end{proof}

\begin{proof}[Proof of Lemma \ref{lemma:private_signal}]
Recall that at time $t$, $H^i_t = \{a_s^j \colon s<t,j \in N_i\}$ is the history of actions observed by $i$ and $s^i_1,\ldots,s^i_t$ is the sequence of private signals received by $i$. Given a \emph{pure} strategy profile $\sigma$, agent $i$ chooses a unique action $a^i_t \in \AA$ at time $t$: $a^i_t = \sigma(s^i_1, \cdots, s^i_t, H^i_t)$.
It follows that for each history $H^i_t$ there is a set $\sig^{i}(H^i_t) \subseteq \sig^i_{1} \times \dots \times \sig^i_{t-1}$ of possible private signal realizations $s^i_1,\ldots,s^i_{t-1}$ that are consistent with $ H^i_t$:
\[
    \sig^{i}(H^i_t) = \left\{ \ss_1^i,\ldots,\ss_{t-1}^i \in \sig^i_{1} \times \dots \times \sig^i_{t-1}\,:\, \Prob[s_1^i=\ss_1^i,\ldots,s_{t-1}^i=\ss_{t-1}^i|H_t^i] > 0
    \right\} \,.
\]
In other words, if we imagine an outside observer who sees only $H^i_t$---i.e.\ sees $i$'s actions and his neighbors' actions---then $\sig^{i}(H^i_t)$ is the set of private signal realizations 
$(s^i_1,\ldots,s^i_{t-1})$ to which this observer assigns positive probability.
%Note, here that $H_t^i$ contains no information about the signal at time $t$ as this history only contains actions up to time $t-1$.

Consider the numerator $\Prob[s_{1}^i,\ldots,s_t^i|H^i_t,\omega=\fg]$ of the private log-likelihood ratio $P^{i,\fg,\ff}_t$. Using the definition of $\sig^{i}(H^i_t)$, we can write
%Since for each history of actions $H^i_t$, there is a set of $\sig^i_{H_t^i}$ of possible signals $s^i_1, \cdots, s^i_{t-1}$ that are consistent with $H^i_t$, it follows that 
\[
    \Prob[s_{1}^i,\ldots,s_t^i|H^i_t,\omega=\fg] = \Prob[s^i_1,\ldots,s^i_t|(s^i_1,\ldots,s^i_{t-1}) \in \sig^{i}(H^i_t),\omega=\fg] \quad \text{almost surely.}
\]
The above equality holds as conditional on $\omega=\fg$ the signals of different agents are independent and hence the only relevant information about agent $i$'s signals $s_1^i,\ldots,s_t^i$ contained in the history $H_t^i$ is the restriction the history imposes on the realization of these signals. 

Let $\mu_{1 \ldots t}^{i,\fg}$ be the measure over signal realizations $s_1^i,\ldots,s_t^i$ when $\omega = \fg$. Then
\begin{align*}
\Prob[s^i_1,\ldots,s^i_t|(s^i_1,\ldots,s^i_{t-1}) \in \sig^{i}(H^i_t),\omega=\fg] = \frac{\mu^{i, \fg}_{1 \ldots t}(s^i_1,\ldots,s^i_t)}{\mu^{i, \fg}_{1 \ldots t-1}(\sig^{i}(H^i_t))}.
\end{align*} 
We thus have that 
\begin{align}
  \label{eq:signal-llr-transformed}
    P_t^{i,\fg,\ff} =\log \frac{\mu^{i, \fg}_{1 \ldots t}(s^i_1,\ldots,s^i_t)}{\mu^{i,\ff }_{1 \ldots t}(s^i_1,\ldots,s^i_t)} + \log\frac{\mu^{i, \ff}_{1 \ldots t-1}(\sig^{i}(H^i_t))}{\mu^{i, \fg}_{1 \ldots t-1}(\sig^{i}(H^i_t))}.
\end{align}
Since the signals are independent over time the first term of \eqref{eq:signal-llr-transformed} is equal to 
\begin{align*}\sum_{\tau=1}^t\log  \frac{\mu^{i,\fg}_{\tau}(s^i_\tau)}{\mu^{i,\ff}_{\tau}(s^i_\tau)},
\end{align*}
which is at most $\frac{1}{2}Mt$. The second term of \eqref{eq:signal-llr-transformed}, which is equal to 
\[
\log \frac{\sum_{(\ss_{1}^i, \cdots, \ss_{t-1}^i) \in \sig^{i}(H^i_t)} \prod_{\tau =1}^{t-1} \mu^{i,\ff}_{\tau}(\ss_{\tau}^i)}{\sum_{(\ss_{1}^i, \cdots, \ss_{t-1}^i) \in \sig^{i}(H^i_t)} \prod_{\tau =1}^{t-1} \mu^{i,\fg}_{\tau}(\ss_{\tau}^i)}
\]
is also at most $\frac{1}{2}M(t-1)$.\footnote{This follows from the fact that for any two sequences of positive numbers $(a_1,\ldots,a_n)$ and $(b_1,\ldots,b_n)$ it holds that 
\[
    \frac{\sum_k a_k}{\sum_k b_k} = \frac{\sum_k b_k (a_k/b_k)}{\sum_k b_k} \in \left[\min_k (a_k/b_k), \max_k (a_k/b_k)\right] \,.
\]} Thus, it follows that $P^{i,\fg,\ff}_t$ is at most $Mt$. By an analogous argument $P^{i,\ff,\fg}_t=-P^{i,\fg,\ff}_t$ is at least $Mt$, and so $|P^{i,\fg,\ff}_t|$ is at most $Mt$.
\end{proof}

\begin{proof}[Proof of Theorem \ref{main theorem}]
%Let $M = 2\cdot \max_{\ss \in \sig} \abs*{ \log\frac{\mu_{\mathfrak{g}}(\ss)}{\mu_{\mathfrak{b}}(\ss)}}$. 
Fix a discount factor $\delta \in [0,1)$. Let $\sigma$ be an equilibrium and $a^i_t$ be the action taken by $i$ at time $t$ under $\sigma$. Now consider an outside observer $x$ who observes everybody's actions so that the information available to him at time $t$ is $H_t = \{a^i_s, i \in N, s\leq t\}$ and at time infinity is $H_{\infty}= \cup_{t} H_t$. Thus at any time $t$, this observer can calculate the social likelihood $Q_t^i$ for all $i$.

Suppose that the social likelihood is high, and in particular $Q_t^{i,\fg,\ff} > Mt+C$ at some $t$ for some constant $C$, some state $\fg$ and all $\ff \neq \fg$. Since the private likelihood $P^{i,\fg,\ff}_t$ cannot be less than $-Mt$ (Lemma \ref{lemma:private_signal}), the posterior likelihood $L_t^{i,\fg,\ff} = Q_t^{i,\fg,\ff}+P_t^{i,\fg,\ff} > C$. In this case, supposing $C$ is high enough, by Corollary~\ref{cor:strategic-myopic}, the agent will choose the myopic action $a_t^i = m_t^i = a^\fg$.  Thus, under this condition on $Q^i_t$ the outside observer will know which action the agent will choose in equilibrium, and will not learn anything (in particular, about the agent's signals  or the state) from observing this action.

Suppose towards a contradiction that the equilibrium speed of learning $r$ is strictly higher than $M$, i.e.  $r = M+\varepsilon$ for some $\varepsilon>0$. Then it follows from Lemma \ref{prop1} that for any $C>0$, $Q^{i,\fg,\ff}_t \geq  (M+\varepsilon)t > Mt+C$ for all $t$ large enough and all $\ff \neq \fg$. Since there are finitely many agents, this will hold for all $i$, for all $t$ larger than some (random) $T$. Hence the outside observer $x$ learns nothing more from the agents' actions after time $T$.

Let $a^x_t$ be the action that $x$ would choose to maximize the probability of matching the state at time $t$. Since no new information is gained after time $T$, the outside observer stops updating their action and so $a^x_T = a^x_\infty$. Hence $\Prob[a^x_\infty \neq a^\omega] > 0$.

Since $x$ observes everyone's actions, by the imitation principle
\begin{align}
\label{eq:x-infty}
 \Prob[a^i_t \neq a^\omega] \geq \Prob[a^x_\infty \neq a^\omega] > 0
\end{align}
for all agents $i$ and all times $t$. But since the equilibrium speed of learning is $M+\varepsilon > 0$, by the definition of speed of learning in \eqref{def:speed}, $\Prob[a^i_t \neq a^\omega]$ converges to zero, in contradiction with \eqref{eq:x-infty}.

\end{proof}

\begin{proof}[Proof of Proposition \ref{prop:bound-minimal-learning-speed}]

Recall that $E \subseteq N$ is a strongly  connected component if there is an indirect observation path from each agent in $E$ to each other agent in $E$. By a standard argument\footnote{Define a preorder on the set of agents $N$ by $i \succeq j$ if there is an indirect observation path from $i$ to $j$. The $\succeq$-equivalence classes are the strongly connected components. The set $E$ is any equivalence class of $\succeq$-minimal elements.}, there exists a strongly connected component $E$ in which no agent observes agents outside of $E$. Hence we can analyze the speed of learning of agents in $E$ in isolation and apply Theorem~\ref{main theorem} to conclude that $r_i \leq M$ for all $i \in E$.
\end{proof}

\begin{proof}[Proof of Proposition \ref{prop:inf_nocycle}] Suppose towards a contradiction that there does not exist a constant $K$ where $r_i \leq  K$ for infinitely many agents, i.e., that $\lim_i r_i=\infty$. This implies that there exists a finite $k$ such that the agents with $r_i \leq M$ constitute a subset of $\{1, \ldots, k\}$. Thus, for all agents $i >k$, $r_i > M$. Fix an agent $n >k$. 

Consider an outside observer $x$ who observes agents $\{1, \ldots, n\}$. Since $r_i >M$ for all $i \in \{k+1, \ldots, n\}$, it follows from Lemma \ref{prop1} that conditioned on $\omega=\fg$, $Q^{i,\fg,\ff}_t > M \cdot t$ for all $i \in \{k+1, \ldots, n\}$, for all $\ff \neq \fg$ and for all $t$ large enough. Since the absolute value of the private likelihood $P^{i,\fg,\ff}_t$ is always less than or equal to $Mt$ at any given time $t$ (Lemma \ref{lemma:private_signal}), all agents $k+1, \ldots, n$ will eventually ignore their private information and act based on their social information. I.e., for all $i\in \{k+1, \ldots, n\}$, $a^i_t$ is determined by $Q^i_t$ for all $t$ large enough. Since there are finitely many agents observed by $x$, there exists a (random) $T$ so that from $T$ onward, $x$, who knows $Q^i_t$, learns nothing more from the actions of agents $\{k+1, \ldots, n\}$. Thus, $x$ will learn as fast as he would if he only observed agents $\{1, \ldots, k\}$. It  follows that the speed of learning of this outsider observer $x$ is at most $k r_a$, which is $k$ times the speed of learning of a single agent, or, equivalently, the speed of learning from directly observing $k$ signals every period. Following the argument in the proof of Theorem~\ref{main theorem}, since $x$ observes $i$, $x$ learns at least as fast as $i$, and so $r_i \leq kM$. Since this holds for all $i$, we have reached a contradiction to the assumption that $\lim_i r_i = \infty$.
\end{proof}

\begin{proof}[Proof of Proposition \ref{prop:chain}]
By the imitation principle for myopic agents, $$r_{i+1} \geq r_{i}$$ for all $i = 1, 2, \ldots$ and $r_1 = r_a < M$ as agent $1$ sees only their private signals.
Suppose towards a contradiction that there exits an agent $k  \geq  2$ agent who learns at a speed that is strictly higher than $M$, i.e., $r_k = M + 2\eps$ for some $\eps >0$ and let $k>1$ be the smallest such integer.  
%Since $r_{j} \geq r_{k}  > M$ for all $j>k$, we can apply Lemma \ref{prop1} to all agents from $k$ and above and since there are finitely many of them, 
By Lemma~\ref{prop1} there exists a (random) $T$ such that for all $t$ larger than $T$, in state $\fg$ the public log-likelihood is greater that the largest likelihood than can be induced by any private signal
\begin{equation}\label{eq:social}
    Q^{k,\fg,\ff}_t > M \cdot t.    
\end{equation}

Consider an outside observer $x$ who observes $H^k_t$: the actions of agents $k-1$ and $k$. Hence $x$ knows $Q^{k,\fg,\ff}_t$ for all $t >T$, and by the same argument of the proof of Theorem~\ref{main theorem}, does not learn anything from $k$'s action after time $T$. Hence $x$'s speed of learning is $r_{k-1}$. As the outside observer $x$ can observe $k$, by the imitation principle, $x$'s speed of learning is at least that of agent $k$. This together implies that $r_{k-1} > M$, which is a contradiction to the original assumption on $k$. 
%$k$'s speed of learning is likewise equal to that of agent $k-1$, contradicting the original assumption.
\end{proof}

\section{Intermittent and Random Observation Times} \label{app:random-time}

In this section we extend our model to allow for some intermittent and random observation times. This extension allows us to consider, for example, a situation in which one pair of agents meet every day, another pair meets every Sunday, and yet another pair meets on a random day of the week.  Our main  result still applies in this setting, and moreover the same proofs apply, with some additional details that need to be verified, as we explain.

Formally, for each pair of agents $i,j$ such that $j \in N_i$ let $O_{i,j}$ be the set of time periods in which $i$ observes $j$. These sets can be random, but we assume that they are independent of each other, the state and the signals. We also assume that there is some number $D > 0$ such that, with probability $1$, for every $i,j$ such that $j 
\in N_i$ and every $t \in \{0,1,2,\ldots\}$, the intersection $\{t+1,\ldots,t+D\} \cap O_{i, j}$ is not empty. That is, if $j \in N_i$ then $i$ observes $j$ at least once every $D$ periods. Hence, the difference between consecutive $t_1,t_2 \in O_{i, j}$ is at most $D$. 

The history of actions observed by agent $i$ at time $t$ is
\begin{align*}
    H^i_t = \{a^j_s\,:\, s < t, s \in O_{i,j}, j \in N_i\}.
\end{align*}
As before, the private history of agent $i$ at time $t$ is $I^i_t = (s^i_1,\ldots,s^i_t,H^i_t)$, and a pure strategy at time $t$ is a map that assigns an action to each possible realization of $I^i_t$. The structure of agents' private signals and utilities remain the same. The speed of learning is likewise defined as before.

We now explain why, in this extended model, Theorem~\ref{main theorem} still holds as stated. The proof of Lemma~\ref{lemma:strategic-myopic} applies verbatim, as only one agent is considered, and the observation structure plays no role. Lemma~\ref{lem:rate-bound-from-observing} likewise still applies, but an adjustment needs to be made: The imitation principle \eqref{eq:imitation} again compares the loss the agent suffers in period $t$, assuming he suffers no loss in the future (on the left-hand-side), to the loss from always taking the action agent $j$ took the last time $t-D$ he was observed by agent $i$ (on the right-hand-side)
\begin{align*}
(1-\delta)(\bar{u}-\E[u(a_t^i,\omega)]) \leq \bar{u}-\E\left[u(a_{\tau(t)}^j, \omega)\right] \leq \max_{t' \in \{t-D-1,\ldots,t-1\}}\bar{u}-\E\left[u(a_{t'}^j, \omega)\right],
\end{align*}
where $\tau(t) =\max O_{i,j} \cap \{0,\ldots,t-1\}$ is the last time (before $t$) that $i$ observed $j$.
Now continuing as in the proof of Lemma~\ref{lem:rate-bound-from-observing} 
\begin{align*}
   \liminf_{t \to \infty} -\frac{1}{t} \log (\bar{u}-\E[u(a_t^i,\omega)])
    &\geq \liminf_{t \to \infty} -\frac{1}{t} \log (\bar{u}-\E[u(a_{t}^j,\omega)])
\end{align*}
which implies the result of  Lemma~\ref{lem:rate-bound-from-observing}. Here we crucially use the fact that there are at most $D$ periods between observations.

The proof of Lemma~\ref{prop1} remains valid, since the observation structure plays no role. The same holds for Lemma~\ref{lemma:private_signal}: The same proof applies to the modified version of the observed history $H^i_t$. Finally, the proof of Theorem~\ref{main theorem} again applies verbatim.

\bibliographystyle{ecta}
\bibliography{reference.bib}

\end{document}